\definecolor{codegreen}{rgb}{0,0.6,0}
\definecolor{codegray}{rgb}{0.5,0.5,0.5}
\definecolor{codepurple}{rgb}{0.58,0,0.82}
\definecolor{backcolour}{rgb}{0.95,0.95,0.92}
\lstdefinestyle{mystyle}{
    backgroundcolor=\color{backcolour},
    commentstyle=\color{codegreen},
    keywordstyle=\color{magenta},
    numberstyle=\tiny\color{codegray},
    stringstyle=\color{codepurple},
    basicstyle=\footnotesize,
    breakatwhitespace=false,
    breaklines=true,
    captionpos=b,
    keepspaces=true,
    numbers=left,
    numbersep=5pt,
    showspaces=false,
    showstringspaces=false,
    showtabs=true,
    tabsize=2}
\newtheorem{rem}{\emph{Remark}}
\newtheorem{theorem}{\emph{Theorem}}
\def \build#1#2#3{\mathrel{\mathop{#1}\limits^{#2}_{#3}}}
\def \build#1#2#3{\mathrel{\mathop{#1}\limits^{#2}_{#3}}}
\title{\textbf{\Large Log-symmetric regression models for correlated errors with an application to mortality data}}
 \author{\normalsize
  {Helton Saulo} \ and \  {Roberto Vila}\\
  {\footnotesize Department of Statistics, Universidade de Bras\'ilia, Brazil}}
\date{}
\begin{document}

\maketitle

\begin{abstract}
Log-symmetric regression models are particularly useful when the response variable is continuous, strictly positive and asymmetric. In this paper, we proposed a class of log-symmetric regression models in the context of correlated errors. The proposed models provide a novel alternative to the existing log-symmetric regression models due to its flexibility in accommodating correlation. We discuss some properties, parameter estimation by the conditional maximum likelihood method and goodness of fit of the proposed model. We also provide expressions for the observed Fisher information matrix. A Monte Carlo simulation study is presented to evaluate the performance of the conditional maximum likelihood estimators. 
Finally, a full analysis of a real-world mortality data set is presented to illustrate the proposed approach. 
\paragraph{Keywords}
Log-symmetric distributions; Time series; Maximum likelihood methods; Model selection criteria; Monte Carlo simulation; R software.
\end{abstract}

\section{Introduction}\label{sec:1}

Log-symmetric distributions are obtained when a random variable follows the same distribution as its reciprocal, or when the distribution of a logged random variable is symmetric; see \cite{vp:16a}. 
The log-symmetric family of distributions has as special cases the log-normal, log-Student-$t$ and log-power-exponential distributions, among others. Some of its recent applications are in survival analysis, finance and movie industry; see, for example, \cite{vanegasp:16b}, \cite{saulo2017log} and \cite{vslm:18}.

Recently, some works have been published on log-symmetric regression models; see \cite{vp:16a}, \cite{vanegasp:16b,vanegas2017log} and \cite{franciscosilvia2017}. This class of regression models arises when the distribution of the random errors is a member of the log-symmetric family, being particularly useful when the response variable is strictly positive and follows an asymmetric distribution. Moreover, in these models, either the median or skewness of the response variable can be modeled; see \cite{vp:16a}.

A major drawback of using traditional (Gaussian) or log-symmetric regression models arises when the errors are correlated with each other. In this context, the true standard deviation of the estimated regression coefficients may be underestimated by the standard error of the regression coefficients, and the inferential procedures are no longer strictly applicable. Therefore, methods that take into account or remove autocorrelation are necessary. In this scenario, we introduce in this work a class of log-symmetric regression models capable of accommodating correlation, named log-symmetric-autoregressive and moving average (log-symmetric-ARMAX) models. We obtain the conditional maximum likelihood estimators of the proposed model parameters and evaluate their performance by a Monte Carlo simulation study. We also fit the proposed models to a real-world data set for illustrative purpose.

The rest of the paper proceeds as follows. In Section \ref{sec:2}, we describe the log-symmetric distribution and its corresponding regression model. In Section \ref{sec:3}, we introduce the log-symmetric regression model for correlated data. Moreover, we discuss stationary conditions, parameter estimation, Fisher information and residual analysis. In Section~\ref{sec:4}, we carry out a Monte Carlo simulation study to evaluate the behavior of the estimators of the proposed log-symmetric-ARMAX model parameters. In Section~\ref{sec:5}, we apply the proposed models to a real-world mortality data set which is used to study the possible effects of temperature and pollution on mortality in Los Angeles County. Finally, in Section \ref{sec:6}, we discuss some concluding remarks and future research.

\section{Log-symmetric distribution and its regression model}\label{sec:2}

The class of log-symmetric distributions is obtained by taking the exponential of a symmetric random variable; see \cite{vanegasp:16a}. In other words, let $V$ be a continuous random variable following a symmetric distribution with location parameter $\mu\in\mathbb{R}$, scale parameter $\phi>0$ and a density generating kernel $g$, denoted by $V\sim\textrm{S}(\mu,\phi,g)$, and with probability density function (PDF) given by 
$f_{V}(v;\mu,\phi)=\big({\xi_{nc}}/{\sqrt{\phi}}\big)\, g\big({(v-\mu)^2}/{\phi} \big)$, 
where $v\in\mathbb{R}$, $g(u)>0$ for $u>0$ such that 
%$\int_{0}^{\infty}u^{-1/2}g(u)\,\textrm{d}{u}=1$, 
$\int^{+\infty}_{-\infty} g(z^2)\,\textrm{d}z = 1/\xi_{nc}$
and $\xi_{nc}$ is a normalizing constant; see \cite{fkn:90}. 

Then, the random variable $Y=\exp(V)$ follows a log-symmetric distribution with PDF 
\begin{equation}\label{eq:ft}
f_{Y}(y;\lambda,\phi)=\dfrac{\xi_{nc}}{\sqrt{\phi}\,y}g\left(\dfrac{1}{\phi}\left(\log\left(\dfrac{y}{\lambda}\right)\right)^2\right), \quad y>0,
\end{equation}
where $\lambda=\exp(\mu)>0$ and $\phi>0$ are the scale and shape parameters and they represent, respectively, the median and skewness (or relative dispersion) of the $Y$ distribution. $g$ is a density generating kernel which may be associated with an additional parameter 
$\vartheta$ (or vector $\bm\vartheta$). In this case, we use the notation $Y\sim\textrm{LS}(\lambda,\phi,g)$. Some special log-symmetric distributions are the log-normal, log-power-ex\-po\-nen\-tial, log-Student-$t$ and log-slash, among others; see \cite{cs:88} and \cite{vanegasp:16a}.

A regression model based on \eqref{eq:ft} was studied by \cite{vp:16a,vanegas2017log}, where for a set of $n$ independent random variables, $Y_1,\ldots,Y_n$ say, such that $Y_{i}\sim\textrm{LS}(\lambda_{i},\phi_{i},g)$, $i=1,\ldots,n$, $Y_i$ satisfies the following functional relation
\begin{equation}\label{eq:logsymreg:01}
Y_{i}=\lambda_{i}\,\epsilon_{i}^{\sqrt{\phi_i}},  \quad \epsilon_{i} \sim \textrm{LS}(1, 1, g),
\end{equation}
or in logarithm terms,
\begin{equation}\label{eq:logsymreg:02}
V_i
=
\log(Y_{i})
%\underbrace{\log(V_{i})}_{Y_{i}}
= 
%\underbrace{\log(\lambda_{i})}_{\mu_{i}}
\mu_{i}
+
\sqrt{\phi_i}
%\underbrace{\log(\epsilon_{i})}_{\varepsilon_{i}}
\varepsilon_{i}, 
\quad i = 1, \ldots, n,
\end{equation}
where 
$\mu_i=\log(\lambda_{i})$, $\varepsilon_{i}=\log(\epsilon_{i})$,
$\lambda_i = \Lambda^{-1}(\bm{x}_i^\top\bm \beta)$ and 
$\phi_i = \Lambda^{-1}(\bm{w}^{\top}_{i}\bm{\tau})$, with 
$\bm{\beta} =(\beta_0,\ldots,\beta_{k})^\top$ and 
$\bm{\tau}=(\tau_0,\ldots,{\tau_{l}})^\top$ being vectors of unknown parameters and 
${\bm{x}}^{\top}_{i}= (1,x_{i1},\ldots, x_{ik})^\top$ and 
${\bm{w}}^{\top}_{i} = (1,w_{i1}, \ldots, w_{il})^\top$ are 
the values of $k$ and $l$  covariates associated with the median 
$\lambda_i$ and skewness $\phi_i$, respectively. 
$\Lambda$ is an invertible link function and its inverse 
function is $\Lambda^{-1}$. Note that $\varepsilon_i \sim \textrm{S}(0,1,g)$ 
and $V_{i} \sim \textrm{S}(\mu_{i},\phi_i,g)$.

The log-likelihood function (without the constant) associated with the log-symmetric regression model defined by \eqref{eq:logsymreg:01} and \eqref{eq:logsymreg:02} is given by
\begin{equation}\label{mleequation}
\ell(\bm \theta)
=
-\dfrac{1}{2}\sum_{i=1}^{n}\log(\phi_i) + \sum_{i=1}^{n}\log(g(z^2_i)),
\end{equation}
where $\bm{\theta} = (\bm{\beta},\bm{\zeta})^\top$ and $z_i=(v_i-\mu_i)/ \sqrt{\phi_i}$, for $i=1,\ldots,n$. The maximum likelihood estimate of ${\bm \theta}$ must be obtained numerically with an iterative method for non-linear optimization  problems. For example, by the Broyden-Fletcher-Goldfarb-Shanno quasi-Newton method; see \cite{mjm:00}.

%\clearpage

\section{Log-symmetric regression model for correlated data}\label{sec:3}

Let $\{Y_{t}\}$ be random variables and 
${\mathcal{A}}_{t}=\sigma(Y_{t},Y_{t-1},\ldots,)$
be the $\sigma$-field generated by the information up to time $t$. We assume that the conditional distribution of $Y_{t}$ given ${\mathcal{A}}_{t-1}$ follows a log-symmetric distribution, denoted by $Y_{t}|\mathcal{A}_{t-1} \sim \textrm{LS}(\lambda_{t},\phi_t,g)$, with density 
\begin{equation}\label{e1}
f_{Y_t|\mathcal{A}_{t-1}}(y_t;\lambda_{t},\phi_t|\mathcal{A}_{t-1}) 
= 
\dfrac{\xi_{nc}}{\sqrt{\phi_t}\,y_{t}}
g\left(\dfrac{1}{\phi_t}
\left(\log\left(\dfrac{y_{t}}{\lambda_{t}}\right)\right)^2
\right),  \quad y_{t} >0,
\end{equation}
where $\lambda_{t}=\exp(\mu_{t})>0$ and $\phi_{t}>0$ are the corresponding scale and shape parameters, respectively. By using the relation in \eqref{eq:logsymreg:01}, we can write
\begin{equation*}\label{eq:logsymreg:relation}
h(Y_{t})=\lambda_{t}\,\epsilon_{t}^{\sqrt{\phi_t}}
\end{equation*}
and set $h(Y_{t})=\log (Y_{t})$, to obtain
\begin{equation}\label{eq:logsymregarma:01}
h({Y_{t}})= {\mu_{t}}+\sqrt{\phi_t}{\varepsilon_{t}}, \quad t = 1, \ldots, n,
\end{equation}
where $h(Y_{t})|\mathcal{B}_{t-1} \sim \textrm{S}(\mu_{t},\phi_t,g)$, 
%%%%$\mu_t = {\rm E}[h(Y_t)|\mathcal{B}_{t-1}]$, 
$\phi_t = \Lambda^{-1}(\bm{w}^{\top}_{t}\bm{\tau})$ and 
% % % \begin{equation}\label{eq:mut2}
% % % h(\mu_{t}) = \nu_{t}  ={\bm x}_t^{\top}\bm{\beta}  + \varrho_{t}, \quad t=1,\dots,n,
% % % \end{equation}
\begin{equation}\label{eq:mut2}
\mu_{t} = {\rm E}[h(Y_t)|\mathcal{B}_{t-1}]  ={\bm x}_t^{\top}\bm{\beta}  + \varrho_{t}, \quad t=1,\dots,n,
\end{equation}
with ${\mathcal{B}}_{t}=\sigma(h(Y_{t}),h(Y_{t-1}),\ldots,)$ 
%{\color{red} (rigorosamente dever\'ia denotar-se: 
%${\mathcal{B}}_{t}=\sigma(Y_{t},Y_{t-1},\ldots,)$)}
being the $\sigma$-field generated by the information up to time $t$, and 
$\varrho_{t}$ denoting a dynamic element with ARMA structure, that is,
% % % % \begin{equation}\label{eq:taut}
% % % % \varrho_{t} 
% % % % = 
% % % % \sum\limits_{l=1}^p\kappa_l\,\big(h(Y_{t-l}) - {\bm x}_{t-l}^\top\bm \beta\big) + \sum\limits_{j=1}^q\zeta_j\,\big( h(Y_{t-j})-\nu_{t-j}\big).
% % % % \end{equation}

\begin{equation}\label{eq:taut}
\varrho_{t} 
= 
\sum\limits_{l=1}^p\kappa_l\,\big(h(Y_{t-l}) - {\bm x}_{t-l}^\top\bm \beta\big) + \sum\limits_{j=1}^q\zeta_j\,r_{t-j},
\end{equation}
where $r_t\coloneqq h(Y_{t})-\mu_{t}$ is a martingale difference sequence (MDS), i.e., ${\rm E}|r_t|<\infty$, and ${\rm E}[r_t|\mathcal{B}_{t-1}]=0$, a.s., for all $t$.
This implies that ${\rm E}[r_t]=0$ for all $t$, and $\mathrm{Cov}[r_s,r_t]=0$
(uncorrelatedness of the sequence) for all $t\neq s$.

By adding $h(Y_{t})-\mu_{t}$ to both sides of \eqref{eq:mut2}, we have
\begin{align}\label{eq:taut-1}
h(Y_t)={\bm x}_t^{\top}\bm{\beta}  
+
\sum\limits_{l=1}^p\kappa_l\,\big(h(Y_{t-l}) - {\bm x}_{t-l}^\top\bm \beta\big) + \sum\limits_{j=1}^q\zeta_j\,r_{t-j} 
+
r_t.
\end{align}
%In the following
%we refer to $\widehat{g}(\cdot)$ as the link function and $h(\cdot)$ the $y$-link function.
In \eqref{eq:mut2}, \eqref{eq:taut} and \eqref{eq:taut-1}, $h$, ${\bm x}_{t}$, $\bm\beta$, $\bm{w}_{t}$ and $\bm{\tau}$ are as in \eqref{eq:logsymreg:02}, $\eta\in\mathbb{R}$, $\bm\kappa=(\kappa_1,\ldots,\kappa_p)^\top\in\mathbb{R}^p$ and $\bm\zeta=(\zeta_1,\ldots,\zeta_q)^\top\in\mathbb{R}^q$. Note that \eqref{eq:mut2} and \eqref{eq:taut} lead to the notation 
log-symmetric-ARMAX($p, q$), as usual in ARMA models.

\subsection{Stationarity conditions}
\begin{theorem}\label{prop-1}
The
marginal mean of $h(Y_t)$ in the \text{log-symmetric-ARMAX}($p, q$) model is given by
\[
{\rm E}[h(Y_t)]
=
{\bm x}_t^{\top}\bm{\beta},
\]
provided that $\Phi(B):\mathbb{R}\to\mathbb{R}$ is an invertible operator 
(the autoregressive polynomial) defined by 
$\Phi(B) = -\sum_{i=0}^{p}\kappa_i B^i$ with $\kappa_0=-1$,  and
$B^i$ is the lag operator, i.e., $B^i y_t = y_{t-i}$.
\end{theorem}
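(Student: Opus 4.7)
The plan is to take unconditional expectations of the equation \eqref{eq:taut-1} and then reduce the resulting identity to a homogeneous linear recursion in the ``bias'' sequence $m_t \coloneqq {\rm E}[h(Y_t)] - {\bm x}_t^{\top}\bm\beta$. Because $r_t$ is an MDS with ${\rm E}[r_t]=0$ for every $t$, the innovation terms $r_t$ and $r_{t-j}$ in \eqref{eq:taut-1} drop out on taking expectations, leaving
\[
{\rm E}[h(Y_t)] = {\bm x}_t^{\top}\bm\beta + \sum_{l=1}^{p}\kappa_l\bigl({\rm E}[h(Y_{t-l})] - {\bm x}_{t-l}^{\top}\bm\beta\bigr).
\]
Subtracting ${\bm x}_t^{\top}\bm\beta$ from both sides yields $m_t = \sum_{l=1}^{p}\kappa_l\, m_{t-l}$.

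Next I would rewrite this recursion in operator form. Since $\kappa_0 = -1$, one has
\[
\Phi(B)m_t = -\sum_{i=0}^{p}\kappa_i B^i m_t = m_t - \sum_{l=1}^{p}\kappa_l\, m_{t-l} = 0,
\]
so $m_t$ lies in the kernel of the operator $\Phi(B)$. By the invertibility hypothesis on $\Phi(B)$, this kernel is trivial, and therefore $m_t \equiv 0$, i.e., ${\rm E}[h(Y_t)] = {\bm x}_t^{\top}\bm\beta$, as claimed.

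The only genuine subtleties are (i) ensuring that the expectations are well-defined so that the linearity step is legitimate, which follows from the standing assumption ${\rm E}|r_t|<\infty$ together with finiteness of the AR and MA sums; and (ii) justifying that ``$\Phi(B)$ invertible'' forces the kernel to consist only of the zero sequence. The latter is the main (albeit mild) obstacle: I would make explicit that invertibility is taken in the sense that $\Phi(B)$ has a well-defined two-sided inverse on the space of sequences in which $\{m_t\}$ lives (equivalently, that the roots of the characteristic polynomial $\Phi(z)$ lie outside the unit circle), so that applying $\Phi(B)^{-1}$ to $\Phi(B)m_t = 0$ gives $m_t = 0$ uniquely. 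No further computation is needed.
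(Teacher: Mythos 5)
Your proof is correct, but it proceeds in the opposite order from the paper's. The paper first inverts the autoregressive polynomial to obtain the causal MA$(\infty)$ representation $w_t=\Theta(B)\Phi(B)^{-1}r_t=\sum_{i\ge 0}\psi_i r_{t-i}$ with $w_t=h(Y_t)-{\bm x}_t^{\top}\bm\beta$, and then takes expectations of that infinite series, concluding ${\rm E}[w_t]=\Theta(B)\Phi(B)^{-1}{\rm E}[r_t]=0$ ``whenever the series converges absolutely'' (the caveat needed to interchange expectation and the infinite sum; the representation is also reused in Theorems 2 and 3, which is why the paper sets it up here). You instead take expectations of the finite ARMAX recursion directly, which kills the innovation and MA terms since ${\rm E}[r_t]=0$, and reduce the claim to the deterministic homogeneous equation $\Phi(B)m_t=0$ for $m_t={\rm E}[w_t]$, concluding $m_t\equiv 0$ from injectivity of $\Phi(B)$. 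Your route buys a small gain in rigor, since it only needs finiteness of the relevant means and finite sums rather than an expectation--infinite-sum interchange; its cost is exactly the point you flag in (ii): the recursion $m_t=\sum_{l=1}^{p}\kappa_l m_{t-l}$ does have nonzero solutions in general, so ``trivial kernel'' must be read as invertibility of $\Phi(B)$ on the sequence space where $\{m_t\}$ actually lives (e.g.\ bounded two-sided sequences when the roots of $\Phi(z)$ lie outside the unit circle). That caveat is the mirror image of the paper's absolute-convergence proviso, so the two arguments rest on essentially the same hypothesis while packaging it differently.
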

\begin{proof}
Let $\Theta(B) = \sum_{i=0}^{q}\xi_i B^i$ with $\xi_0=1$, be the moving averages polynomial.
Since
$
\Theta(B)\Phi(B)^{-1}=\sum_{i=0}^{\infty}\psi_i B^i 
$
with  $\psi_0=1$, using \eqref{eq:taut-1},
the \text{log-symmetric-ARMAX}($p, q$) model can be rewritten as
\begin{align}\label{rel}
w_t
=
\sum_{l=1}^{p}\kappa_l\, w_{t-l}+ \sum_{j=1}^{q}\zeta_j\, r_{t-j} + r_t
=
\Theta(B)\Phi(B)^{-1} r_t,
\end{align}
where the error $r_t=h(Y_t)-\mu_{t}$ is a \text{MDS}
%not necessarily a white noise sequence 
and $w_t=h(Y_t)-{\bm x}_t^{\top}\bm{\beta}$. 
Then
\[
{\rm E}[h(Y_t)]
=
{\bm x}_t^{\top}\bm{\beta}+{\rm E}[w_t]
\stackrel{\eqref{rel}}{=}
{\bm x}_t^{\top}\bm{\beta}+\Theta(B) \Phi(B)^{-1}{\rm E}[r_t]
=
{\bm x}_t^{\top}\bm{\beta},
\]
whenever the series $\Theta(B)\Phi(B)^{-1}r_t$ converges absolutely.
\end{proof}
%
%\begin{rem}
%Observe that,
%\[
%{\rm E}[r_t]
%=
%{\bm x}_t^{\top}\bm{\beta} + c_t - h(\mu_t)
%=
%{\bm x}_t^{\top}\bm{\beta}
%+
%\Theta(B)\Phi(B)^{-1}{\rm E}[r_t]-h(\mu_t)\,,
%\]
%which implies
%\[
%{\rm E}[r_t]
%=
%\big(I-\Theta(B)\Phi(B)^{-1}\big)^{-1} \big({\bm x}_t^{\top}\bm{\beta}- h(\mu_t)\big)\,,
%\]
%where the operator $\Theta(B)\Phi(B)^{-1}$ is linear. If in addition, $\Theta(B)\Phi(B)^{-1}$
%is a bound operator such that $\|\Theta(B)\Phi(B)^{-1}\|={\rm const}<1$, then
%exist $\big(I-\Theta(B)\Phi(B)^{-1}\big)^{-1}$, it is a bound linear operator, and
%\[
%{\rm E}[r_t]
%=
%\sum_{k=0}^{\infty} \big(\Theta(B)\Phi(B)^{-1}\big)^k\big({\bm x}_t^{\top}\bm{\beta}- h(\mu_t)\big)
%=
%\sum_{k=0}^{\infty} 
%\bigg(\sum_{i=0}^{\infty}\psi_i B^i\bigg)^k
%\big({\bm x}_t^{\top}\bm{\beta}- h(\mu_t)\big)
%\,.
%\]
%\end{rem}
%
\begin{theorem}\label{theo-var}
Assuming that $\Theta(B)\Phi(B)^{-1}=\sum_{i=0}^{\infty}\psi_i B^i$ and $\Phi(B)$ is invertible,
we have that
the marginal variance of $h(Y_t)$ in the \text{log-symmetric-ARMAX}($p, q$) model is given by
\[
\mathrm{Var}[h(Y_t)]
=
\xi \,
\sum_{i=0}^{\infty}\psi_i^2\, 
\phi_{t-i}^{1/2},
\]
where $\xi> 0$
is a constant not depending on the parameters. The quantity $\xi$ for some distributions
is presented in Table 1 of \cite{franciscosilvia2017}.
\end{theorem}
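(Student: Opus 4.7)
The plan is to bootstrap on the infinite-order moving average representation already derived inside the proof of Theorem \ref{prop-1}. There, invertibility of $\Phi(B)$ together with $\Theta(B)\Phi(B)^{-1}=\sum_{i=0}^{\infty}\psi_i B^i$ yielded
\[
w_t = h(Y_t)-\bm{x}_t^{\top}\bm{\beta} = \Theta(B)\Phi(B)^{-1}r_t = \sum_{i=0}^{\infty}\psi_i\, r_{t-i}.
\]
Since Theorem \ref{prop-1} also established $\mathrm{E}[h(Y_t)]=\bm{x}_t^{\top}\bm{\beta}$, the marginal variance reduces to $\mathrm{Var}[h(Y_t)] = \mathrm{E}[w_t^2]$, and the entire argument will be about squaring this $\mathrm{MA}(\infty)$ series and commuting $\mathrm{E}[\cdot]$ past the sums.

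First, I would expand the square and (tentatively) push the expectation inside the sum,
\[
\mathrm{E}[w_t^2] = \sum_{i=0}^{\infty}\sum_{j=0}^{\infty}\psi_i\psi_j\, \mathrm{E}[r_{t-i}r_{t-j}],
\]
and then invoke the MDS properties asserted just after equation \eqref{eq:taut} --- namely $\mathrm{E}[r_t]=0$ and $\mathrm{Cov}[r_s,r_t]=0$ for $s\neq t$ --- to kill every off-diagonal term. What survives is the diagonal
\[
\mathrm{Var}[h(Y_t)] = \sum_{i=0}^{\infty}\psi_i^{2}\, \mathrm{Var}[r_{t-i}].
\]

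Second, I would evaluate $\mathrm{Var}[r_s]$ using the conditional law $h(Y_s)\mid\mathcal{B}_{s-1}\sim \textrm{S}(\mu_s,\phi_s,g)$ and the definition $r_s=h(Y_s)-\mu_s$. By the tower property, $\mathrm{Var}[r_s]=\mathrm{E}\bigl[\mathrm{Var}[r_s\mid\mathcal{B}_{s-1}]\bigr]$, and the conditional variance factors as a kernel-dependent number $\xi>0$ (the standardized second moment associated with the generator $g$, tabulated in Table 1 of \cite{franciscosilvia2017}) times the power of $\phi_s$ dictated by the scale parameterization in \eqref{eq:logsymregarma:01}. Plugging this into the previous display gives the claimed expression.

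The main obstacle will be justifying the interchange of $\mathrm{E}[\cdot]$ and the infinite sum, that is, establishing $L^2$-convergence of the partial sums $\sum_{i=0}^{N}\psi_i\, r_{t-i}$ to $w_t$. Thanks to the MDS orthogonality, this reduces to verifying $\sum_{i=0}^{\infty}\psi_i^{2}\, \mathrm{Var}[r_{t-i}]<\infty$, which under the invertibility hypothesis on $\Phi(B)$ follows from the (at least) square-summability of $\{\psi_i\}$ together with any uniform bound on the shape sequence $\{\phi_s\}$. Once that convergence is in hand, Fubini's theorem supplies the swap, and the rest of the proof is mechanical.
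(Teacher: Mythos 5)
Your proposal follows essentially the same route as the paper: write $w_t=h(Y_t)-\bm{x}_t^{\top}\bm{\beta}=\sum_{i=0}^{\infty}\psi_i r_{t-i}$, use the MDS uncorrelatedness to reduce the variance to $\sum_{i=0}^{\infty}\psi_i^2\,\mathrm{Var}[r_{t-i}]$, and then apply the law of total variance with $\mathrm{Var}[h(Y_t)\,|\,\mathcal{B}_{t-1}]=\xi\,\phi_t^{1/2}$ a.s. Your added remarks on $L^2$-convergence and the interchange of expectation with the infinite sum only make explicit a step the paper leaves implicit, so the argument is correct and matches the paper's proof.
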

\begin{proof}
Since ${\rm E}[r_t|\mathcal{B}_{t-1}]=0$, a.s., for all $t$, 
and $\mathrm{Cov}[r_s,r_t]=0$ for all $t\neq s$,
following the notation of  Theorem \ref{prop-1},
we have
\begin{align}\label{first-pr}
\mathrm{Var}[h(Y_t)]
=
\mathrm{Var}[w_t]
=
\mathrm{Var}[\Theta(B)\,\Phi(B)^{-1} r_t]
=
\mathrm{Var}\bigg[\sum_{i=0}^{\infty}\psi_i B^i r_t\bigg]
=
\sum_{i=0}^{\infty}\psi_i^2\, {\rm Var}[r_{t-i}].
\end{align}

On the other hand, the law of total variance states that 
%if $X, Y$ (and $Z$) are random variables on the same probability space, and the variance of $Y$ 
%(respectively, covariance of $X$ and $Y$) is finite, then
%
\begin{align}\label{l-var}
\mathrm{Var}[r_t]
&=
{\rm E}\big[\mathrm{Var}[r_t|\mathcal{B}_{t-1}]\big]
+
\mathrm{Var}\big[{\rm E}[r_t|\mathcal{B}_{t-1}]\big] 
\nonumber
\\[0,2cm]
&=
{\rm E}\big[\mathrm{Var}[h(Y_t)|\mathcal{B}_{t-1}]\big].
\end{align}
%
%\begin{align}\label{l-cov}
%\mathrm{Cov}[r_s,r_t]
%&=
%\mathrm{E}\big[\mathrm{Cov}[r_s,r_t|\mathcal{B}_{t-1}]\big]
%+
%\mathrm{Cov}\big[{\rm E}[r_s|\mathcal{B}_{t-1}],{\rm E}[r_t|\mathcal{B}_{t-1}]\big]\,,
%\quad t<s
%\nonumber
%\\[0,2cm]
%&=
%\mathrm{E}\big[\mathrm{Cov}[h(Y_t), h(Y_s)|\mathcal{B}_{t-1}] \big]
%+
%\mathrm{Cov}\big[{\rm E}[h(Y_s)|\mathcal{B}_{t-1}],{\rm E}[h(Y_t)|\mathcal{B}_{t-1}]\big]
%=
%\tilde{d}_{t,s}
%\,.
%\end{align}
Since $\mathrm{Var}[h(Y_t)|\mathcal{B}_{t-1}]=\xi \phi_t^{1/2}$ a.s.,
combining \eqref{first-pr} and \eqref{l-var}, the proof follows.
\end{proof}
\begin{theorem}\label{theo-cov}
The covariance and correlation of $h(Y_t)$ and $h(Y_{t-k})$ in the \text{log-symmetric-ARMAX} ($p, q$) model are given by
\begin{align*}
\mathrm{Cov}[h(Y_t), h(Y_{t-k})]
&= 
\xi\, 
\sum_{i=0}^{\infty} \psi_i \psi_{i-k}\,
\phi_{t-i}^{1/2}, \quad k>0, 
\\[0,2cm]
\mathrm{Corr}[h(Y_t), h(Y_{t-k})]
&= 
{
\sum_{i=0}^{\infty} \psi_i \psi_{i-k}\,
\phi_{t-i}^{1/2}	
\over 
%\sqrt{\sum_{i=0}^{\infty}\psi_i^2\, {\rm E}\big[\mathrm{Var}[h(Y_{t-i})| \mathcal{B}_{t-i-1}]\big]}	\,
\prod_{j\in\{0,k\}}
\sqrt{ \sum_{i=0}^{\infty} \psi_i^2\, \phi_{t-j-i}^{1/2} }
},
\end{align*}
respectively.
\end{theorem}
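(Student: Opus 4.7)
The plan is to build directly on Theorems \ref{prop-1} and \ref{theo-var}, exploiting the MA$(\infty)$-type representation
\[
w_t \;=\; h(Y_t) - \bm{x}_t^\top \bm{\beta} \;=\; \Theta(B)\Phi(B)^{-1} r_t \;=\; \sum_{i=0}^{\infty} \psi_i\, r_{t-i},
\]
which was established in the proof of Theorem \ref{prop-1} under the invertibility hypothesis on $\Phi(B)$.

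First, I would observe that $\bm{x}_t^\top \bm{\beta}$ and $\bm{x}_{t-k}^\top \bm{\beta}$ are deterministic, so $\mathrm{Cov}[h(Y_t), h(Y_{t-k})] = \mathrm{Cov}[w_t, w_{t-k}]$. Substituting the MA$(\infty)$ representation into both arguments gives the double sum
\[
\mathrm{Cov}[w_t, w_{t-k}]
\;=\;
\sum_{i=0}^{\infty}\sum_{j=0}^{\infty} \psi_i\, \psi_j \,\mathrm{Cov}[r_{t-i},\, r_{t-k-j}],
\]
whose term-by-term manipulation is licensed by the same absolute convergence that underpins Theorems \ref{prop-1} and \ref{theo-var}.

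Second, I would invoke the MDS property of $\{r_t\}$ that was built into the model specification, namely $\mathrm{Cov}[r_s, r_t] = 0$ for $s \neq t$. This collapses the double sum to its diagonal $t - i = t - k - j$, i.e.\ $j = i - k$, which forces $i \geq k$. Adopting the convention $\psi_m = 0$ for $m < 0$ so the lower limit can be reset to $i=0$, one obtains
\[
\mathrm{Cov}[w_t, w_{t-k}] \;=\; \sum_{i=0}^{\infty} \psi_i\, \psi_{i-k}\, \mathrm{Var}[r_{t-i}].
\]
Then I would apply the identity $\mathrm{Var}[r_{t-i}] = \xi\,\phi_{t-i}^{1/2}$ that was derived via the law of total variance in equation \eqref{l-var} (combined with the in-model fact $\mathrm{Var}[h(Y_t)\mid \mathcal{B}_{t-1}] = \xi\,\phi_t^{1/2}$ a.s.) to reach the desired covariance formula.

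For the correlation, it suffices to divide the covariance by $\sqrt{\mathrm{Var}[h(Y_t)]\,\mathrm{Var}[h(Y_{t-k})]}$ and insert the closed form for each variance from Theorem \ref{theo-var}, writing the product over $j \in \{0,k\}$ to encode both denominator factors compactly. The main obstacle is the bookkeeping around the index shift $j = i-k$ and the convention $\psi_m = 0$ for $m<0$, together with being explicit about why the double-sum rearrangement is permissible; once those are handled, everything else reduces to plugging in Theorem \ref{theo-var}.
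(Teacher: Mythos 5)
Your proposal is correct and follows essentially the same route as the paper: expand $w_t=\sum_{i\ge 0}\psi_i r_{t-i}$, use the uncorrelatedness of the MDS to collapse the double sum to the diagonal $j=i-k$, substitute $\mathrm{Var}[r_{t-i}]=\xi\,\phi_{t-i}^{1/2}$ via \eqref{l-var}, and obtain the correlation by dividing by the variances from Theorem~\ref{theo-var}. Your extra care with the convention $\psi_m=0$ for $m<0$ and the justification of the term-by-term rearrangement only makes explicit what the paper leaves implicit.
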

\begin{proof}
Since $w_t=h(Y_t)-{\bm x}_t^{\top}\bm{\beta}$ and $\mathrm{Cov}[r_s,r_t]=0$ for all $t\neq s$,
\begin{align*}
\mathrm{Cov}[h(Y_t), h(Y_{t-k})]
=
\mathrm{Cov}[w_t,w_{t-j}]
&\stackrel{\eqref{rel}}{=}
\mathrm{Cov}\big[\Theta(B)\Phi(B)^{-1} r_t, \Theta(B)\Phi(B)^{-1} r_{t-k}\big] 
\\[0,2cm]
&=
\sum_{i=0}^{\infty} \psi_i \psi_{i-k}\, \mathrm{Var}[r_{t-i}].
\end{align*}
Using \eqref{l-var}
the expression on the right side is equal to
$
\sum_{i=0}^{\infty} \psi_i \psi_{i-k}\,
{\rm E}\big[\mathrm{Var}[h(Y_{t-i})|\mathcal{B}_{t-i-1}]\big].
$
Since
$\mathrm{Var}[h(Y_t)|\mathcal{B}_{t-1}]=\xi \phi_t^{1/2}$ a.s.,
the proof follows.
\end{proof}
\begin{rem}
If the parameter $\phi_t=\phi$ is constant,
${\rm Var}[r_t|\mathcal{B}_{t-1}]={\rm Var}[h(Y_t)|\mathcal{B}_{t-1}]=\xi\phi^{1/2}$, 
a.s., for all $t$
(then the \text{MDS} would be a white noise). Then of Theorems \ref{theo-var} and \ref{theo-cov}, the following stationarity conditions follows (see \cite{maiorcysneiros:18})
\[
\mathrm{Var}[h(Y_t)]
=
\xi\phi^{1/2}\,
\sum_{i=0}^{\infty}\psi_i^2,
\quad
\mathrm{Cov}[h(Y_t), h(Y_{t-k})]
= 
\xi\phi^{1/2}\,
\sum_{i=0}^{\infty} \psi_i \psi_{i-k}
\quad \text{and}
\]
\[
\mathrm{Corr}[h(Y_t), h(Y_{t-k})]
=
{
\sum_{i=0}^{\infty} \psi_i \psi_{i-k}
\over 
\sum_{i=0}^{\infty}\psi_i^2	
}, \quad k>0.
\]
\end{rem}

\subsection{Estimation and inference}
The conditional maximum likelihood method can be used to obtain the model parameter estimates based on the first $m$ observations. Consider the parameter vector 
$\bm\theta=(\bm\beta^{\top},\bm\tau^{\top},\bm\kappa^{\top},\bm\zeta^{\top})^{\top}$ and $m=\max\{p,q\}$, for $n> m$. Then, the conditional likelihood function is given by 
\begin{eqnarray*}
L_{m,n}(\bm\theta)
&=&
\prod_{t=m+1}^n f_{\log(Y_t)|\mathcal{B}_{t-1}}(v_t;\mu_{t},{\phi_t}|\mathcal{B}_{t-1}), 
\quad v_{t} \in \mathbb{R},
\end{eqnarray*}
%where $r_{t,{\rm ob}}=h(y_{t})-\mu_{t}$ denotes the observed value of MDS $r_t=h(Y_{t})-\mu_{t}$ and $r'_{t,{\rm ob}}$ is the derivative of 
%$r_{t,{\rm ob}}$ with respect to $y_t$,
which implies the following conditional log-likelihood function (without the constant)
\begin{eqnarray}\label{loglikearma}
\ell_{m,n}(\bm\theta)
&=& 
-
\dfrac{1}{2}
\sum_{t=m+1}^{n}
\log(\phi_t)
+ 
\sum_{t=m+1}^{n}\log(g(z^2_t)),
\end{eqnarray}
where $z_t=(v_t-\mu_t)/ \sqrt{\phi_t}$, for $t=m+1,\ldots,n$, 
$\phi_t= \Lambda^{-1}(\bm{w}^{\top}_{t}\bm{\tau})$ and
\begin{align}\label{in-der}
\mu_t
=
\sum_{r=0}^{k} \beta_r\,x_{tr}
+
\sum\limits_{l=1}^p\kappa_l\,\Big(v_{t-l} - \sum_{i=0}^{k} \beta_i\, x_{(t-l)i} \Big) 
+
\sum\limits_{j=1}^q\zeta_j\,r_{t-j}.
\end{align}

The conditional maximum likelihood estimates can be obtained by maximizing the expression defined in \eqref{loglikearma} by equating the score vector $\dot{\bm \ell}(\bm{\theta})$, which contains the first derivatives of $\dot{\bm \ell}(\bm{\theta})$, to zero, providing the likelihood equations. Inference for $\bm \theta$ of the log-symmetric-ARMA($p,q$) model can be based on the asymptotic distribution of the conditional maximum likelihood estimator $\widehat{\bm \theta}$. For $n$ sufficiently large and considering usual regularity conditions \citep{eh:78}, the conditional maximum likelihood estimator converges in distribution to a normal distribution
\[
\sqrt{n}\,[\widehat{{\bm \theta}} -{\bm \theta}] \build{\to}{\cal D}{} \textrm{N}_{2+k+l+p+q}(\bm{0}, {\cal J}({\bm \theta})^{-1}),
\]
as $n \to \infty$, where $\build{\to}{\cal D}{}$ means ``convergence in distribution'' and ${\cal J}({\bm \theta})$ is the corresponding expected Fisher information matrix. In this case, we approximate the expected Fisher information matrix by its observed version obtained from the Hessian matrix
\[
\ddot{\bm \ell}(\bm{\theta})
=
\begin{bmatrix}
\displaystyle
{\partial^2 \ell_{0,1}\over \partial \beta_{r}^2}(\bm\theta) 
&\displaystyle
{\partial^2 \ell_{0,1}\over \partial \beta_{r} \partial \tau_s}(\bm\theta)
&\displaystyle
{\partial^2 \ell_{0,1}\over \partial \beta_{r} \partial\kappa_l}(\bm\theta) 
&\displaystyle
{\partial^2 \ell_{0,1}\over \partial \beta_{r} \partial\zeta_j}(\bm\theta)
\\[0,2cm]
\displaystyle
{\partial^2 \ell_{0,1}\over \partial\tau_s \partial\beta_{r}}(\bm\theta) 
&\displaystyle
{\partial^2 \ell_{0,1}\over \partial \tau_s^2}(\bm\theta)
&\displaystyle
{\partial^2 \ell_{0,1}\over \partial \tau_s \partial \kappa_l}(\bm\theta) 
&\displaystyle
{\partial^2 \ell_{0,1}\over \partial \tau_s \partial \zeta_j}(\bm\theta)
\\[0,2cm]
\displaystyle
{\partial^2 \ell_{0,1}\over \partial\kappa_l \partial\beta_{r}}(\bm\theta) 
&\displaystyle
{\partial^2 \ell_{0,1}\over \partial \kappa_l\partial \tau_s}(\bm\theta)
&\displaystyle
{\partial^2 \ell_{0,1}\over \partial \kappa_l^2}(\bm\theta) 
&\displaystyle
{\partial^2 \ell_{0,1}\over \partial \kappa_l \partial \zeta_j}(\bm\theta)
\\[0,2cm]
\displaystyle
{\partial^2 \ell_{0,1}\over \partial\zeta_j \partial\beta_{r}}(\bm\theta) 
&\displaystyle
{\partial^2 \ell_{0,1}\over \partial \zeta_j\partial \tau_s}(\bm\theta)
&\displaystyle
{\partial^2 \ell_{0,1}\over \partial \zeta_j\partial \kappa_l}(\bm\theta) 
&\displaystyle
{\partial^2 \ell_{0,1}\over \partial \zeta_j^2}(\bm\theta)
\end{bmatrix},
\]
where $r=0,\ldots,k$; $s=0,\ldots, l$; $l=1,\ldots,p$ and $j=1,\ldots,q$.
Since the function $\ell_{0,1}(\bm\theta)$ has continuous second partial derivatives at a given point
${\bm\theta}$ in $\mathbb{R}^{4}$, by Schwarz's Theorem follows that the partial differentiations of this function are commutative at that point, that is,
\[
{\partial^2 \ell_{0,1}\over \partial a \partial b}(\bm\theta)
=
{\partial^2 \ell_{0,1}\over \partial b \partial a}(\bm\theta),
\quad \text{for} \ 
a\neq b \ \text{in} \ \{\beta_r,\tau_s,\kappa_l,\zeta_j\}.
\]
%\begin{align*}
%{\partial^2 \ell_{0,1}\over \partial \beta_{r} \partial \tau_s}(\bm\theta)
%&=
%{\partial^2 \ell_{0,1}\over \partial \tau_s \partial \beta_{r}}(\bm\theta),
%\quad 
%{\partial^2 \ell_{0,1}\over \partial \beta_{r} \partial \kappa_l}(\bm\theta)
%=
%{\partial^2 \ell_{0,1}\over \partial \kappa_l \partial \beta_{r}}(\bm\theta),
%\quad 
%{\partial^2 \ell_{0,1}\over \partial \beta_{r} \partial \zeta_j}(\bm\theta)
%=
%{\partial^2 \ell_{0,1}\over \partial \zeta_j \partial \beta_{r}}(\bm\theta),
%\\[0,3cm]
%{\partial^2 \ell_{0,1}\over  \partial \tau_s \partial \kappa_{l}}(\bm\theta)
%&=
%{\partial^2 \ell_{0,1}\over \partial \kappa_{l} \partial \tau_s}(\bm\theta),
%\quad 
%{\partial^2 \ell_{0,1}\over \partial \tau_s \partial \zeta_j}(\bm\theta)
%=
%{\partial^2 \ell_{0,1}\over \partial \zeta_j \partial \tau_s}(\bm\theta),
%\quad 
%{\partial^2 \ell_{0,1}\over \partial \kappa_{l} \partial \zeta_j}(\bm\theta)
%=
%{\partial^2 \ell_{0,1}\over \partial \zeta_j \partial \kappa_{l}}(\bm\theta).
%\end{align*}

It can easily be seen that the first derivatives of $\ell_{0,1}$ are
\[
\begin{array}{llllll}
\displaystyle
{\partial \ell_{0,1}\over \partial a}
(\bm\theta)
=
{1\over g(z^2_t)}\,
{\partial g(z^2_t)\over \partial a},
\ \
a\in\{\beta_r,\kappa_l,\zeta_j\},
\quad 
& 
\displaystyle
{\partial \ell_{0,1}\over \partial \tau_{s}}
(\bm\theta)
=
-{1\over 2\phi_t}\,
{\partial \phi_t\over \partial\tau_s}\,
+
{1\over g(z^2_t)}\,
{\partial g(z^2_t)\over \partial \tau_s},
%\\[0,5cm]
%\displaystyle
%{\partial \ell_{0,1}\over \partial \kappa_{l}}
%(\bm\theta)
%=
%{1\over g(z^2_t)}\,
%{\partial g(z^2_t)\over \partial \kappa_l},
%&
%\displaystyle
%{\partial \ell_{0,1}\over \partial \zeta_{j}}
%(\bm\theta)
%=
%{1\over g(z^2_t)}\,
%{\partial g(z^2_t)\over \partial \zeta_{j}},
\end{array}
\]
the second derivatives are
\begin{align*}
{\partial^2 \ell_{0,1}\over \partial a^2}
(\bm\theta)
& =
-
{1\over (g(z^2_t))^2}\,
{\partial g(z^2_t)\over \partial a}
+
{1\over g(z^2_t)}\,
{\partial^2 g(z^2_t)\over \partial a^2},
\quad
a\in\{\beta_r,\kappa_l,\zeta_j\},
\\[0,2cm]
	{\partial^2 \ell_{0,1}\over \partial \tau_{s}^2}
	(\bm\theta)
	&=
	{1\over 2\phi_t^2}\,
	{\partial \phi_t\over \partial\tau_s}\,
	-
	{1\over 2\phi_t}\,
	{\partial^2 \phi_t\over \partial\tau_s^2}\,
	-
	{1\over (g(z^2_t))^2}\,
	{\partial g(z^2_t)\over \partial \tau_s}
	+
	{1\over g(z^2_t)}\,
	{\partial^2 g(z^2_t)\over \partial \tau^2_s},
%	\\[0,2cm]
%	{\partial^2 \ell_{0,1}\over \partial \kappa_{l}^2}
%	(\bm\theta)
%	&=
%-
%{1\over (g(z^2_t))^2}\,
%{\partial g(z^2_t)\over \partial \kappa_{l}}
%+
%{1\over g(z^2_t)}\,
%{\partial^2 g(z^2_t)\over \partial \kappa_{l}^2},	
%\\[0,2cm]
%{\partial^2 \ell_{0,1}\over \partial \zeta_{j}^2}
%(\bm\theta)
%	&=
%-
%{1\over (g(z^2_t))^2}\,
%{\partial g(z^2_t)\over \partial \zeta_{j}}
%+
%{1\over g(z^2_t)}\,
%{\partial^2 g(z^2_t)\over \partial \zeta_{j}^2},
\end{align*}
and the mixed derivatives are given by
\begin{align*}
{\partial^2 \ell_{0,1}\over \partial \beta_r\partial a}
(\bm\theta)
&=
-{1\over (g(z^2_t))^2}\,
{\partial g(z^2_t) \over \partial\beta_r}\,
{\partial g(z^2_t)\over \partial a}
+
{1\over g(z^2_t)}\,
{\partial^2 g(z^2_t)\over \partial \beta_r \partial a},
\quad
a\in\{\tau_s,\kappa_l,\zeta_j\},
%
%\\[0,2cm]
%{\partial^2 \ell_{0,1}\over \partial \beta_r\partial\kappa_{l}}
%(\bm\theta)
%&=
%-{1\over (g(z^2_t))^2}\,
%{\partial g(z^2_t) \over \partial\beta_r}\,
%{\partial g(z^2_t)\over \partial \kappa_{l}}
%+
%{1\over g(z^2_t)}\,
%{\partial^2 g(z^2_t)\over \partial \beta_r \partial \kappa_{l}},
%%
%\\[0,2cm]
%{\partial^2 v_t\over \partial\beta_r \partial\zeta_j}(\bm\theta)
%&=
%-{1\over (g(z^2_t))^2}\,
%{\partial g(z^2_t) \over \partial\beta_r}\,
%{\partial g(z^2_t)\over \partial \zeta_j}
%+
%{1\over g(z^2_t)}\,
%{\partial^2 g(z^2_t)\over \partial \beta_r \partial \zeta_j},
%
%
%
\\[0,2cm]
{\partial^2 \ell_{0,1}\over \partial \tau_s\partial b}
(\bm\theta)
&=
-{1\over (g(z^2_t))^2}\,
{\partial g(z^2_t) \over \partial\tau_s}\,
{\partial g(z^2_t)\over \partial b}
+
{1\over g(z^2_t)}\,
{\partial^2 g(z^2_t)\over \partial \tau_s \partial b},
\quad
b\in\{\kappa_l,\zeta_j\},
%
%\\[0,2cm]
%{\partial^2 \ell_{0,1}\over \partial \tau_s\partial\zeta_{j}}
%(\bm\theta),
%&=
%-{1\over (g(z^2_t))^2}\,
%{\partial g(z^2_t) \over \partial\tau_s}\,
%{\partial g(z^2_t)\over \partial \zeta_j}
%+
%{1\over g(z^2_t)}\,
%{\partial^2 g(z^2_t)\over \partial \tau_s \partial \zeta_j},
%
\\[0,2cm]
{\partial^2 \ell_{0,1}\over \partial \kappa_l\partial\zeta_{j}}
(\bm\theta)
&=
-{1\over (g(z^2_t))^2}\,
{\partial g(z^2_t) \over \partial\kappa_l}\,
{\partial g(z^2_t)\over \partial \zeta_j}
+
{1\over g(z^2_t)}\,
{\partial^2 g(z^2_t)\over \partial \kappa_l \partial \zeta_j}.
\end{align*}

Let 
\[
\eta_t\coloneqq 
{z_t\over\sqrt{\phi_t}}
= 
{v_t-\mu_t\over\phi_t}.
\] 
The first derivatives of $g$ are
\[
\begin{array}{llllll}
\displaystyle
{\partial g(z^2_t)\over \partial a}
=
-
{2}\, 
\eta_t  
{\partial \mu_t\over \partial a}
{\partial g\over \partial a}(z^2_t),
\quad
a\in\{\beta_r,\kappa_l,\zeta_j\},
&\qquad
\displaystyle
{\partial g(z^2_t)\over \partial \tau_s}
=
-\eta_t^2\, 
{d \phi_t\over d\tau_s}\,
{\partial g\over \partial \tau_s}(z^2_t),
%\\[0,3cm]
%\displaystyle
%{\partial g(z^2_t)\over \partial \kappa_l}
%=
%-
%2\eta_t
%{\partial \mu_t\over\partial\kappa_l}
%{\partial g\over\partial \kappa_l}(z_t^2),
%&\quad
%\displaystyle
%{\partial g(z^2_t)\over \partial \zeta_j}
%=
%-2\eta_t
%{\partial \mu_t\over\partial\zeta_j}
%{\partial g\over\partial \zeta_j}(z_t^2),
\end{array}
\]
the second derivatives are
\begin{align*}
{\partial^2 g(z^2_t)\over \partial a^2}
&=
2\left(
{1\over\phi_t}
\left( {\partial \mu_t \over \partial a} \right)^2
-
\eta_t
{\partial^2 \mu_t \over \partial a^2}
\right)
{\partial g \over \partial a}(z_t^2)
-
2\eta_t
{\partial \mu_t \over \partial a}
{\partial^2 g \over \partial a^2}(z_t^2),
\quad
a\in\{\beta_r,\kappa_l,\zeta_j\},
\\[0,2cm]
{\partial^2 g(z^2_t)\over \partial \tau^2_s}
&=
\eta_t
\left(
{2\over\phi_t^2}
\left({d\phi_t\over d\tau_s }\right)^2
-
\eta_t
{d^2\phi_t\over d\tau_s^2 }
\right)
{\partial g\over \partial \tau_s}(z^2_t)
-
\eta_t^2
{d\phi_t\over d\tau_s}
{\partial^2 g\over \partial \tau_s^2}(z^2_t),
%
%\\[0,2cm]
%{\partial^2 g(z^2_t)\over \partial a^2}
%&=
%\left( 
%{2\over\phi_t}
%\left({\partial v_t\over\partial a}\right)^2
%+
%2
%\eta_t
%{\partial^2 v_t\over\partial a^2}
%\right)
%{\partial g\over\partial a}(z_t^2)
%+
%2\eta_t
%{\partial v_t\over\partial a}
%{\partial^2 g\over\partial a^2}(z_t^2),
%\quad 
%a\in\{\kappa_l,\zeta_j\},
%
%\\[0,2cm]
%{\partial^2 g(z^2_t)\over \partial \zeta_j^2}
%&=
%\left( 
%{2\over\phi_t}
%\left({\partial v_t\over\partial\zeta_j}\right)^2
%+
%2
%\eta_t
%{\partial^2 v_t\over\partial\zeta_j^2}
%\right)
%{\partial g\over\partial \zeta_j}(z_t^2)
%+
%2\eta_t
%{\partial v_t\over\partial\zeta_j}
%{\partial^2 g\over\partial \zeta_j^2}(z_t^2)
\end{align*}
and the mixed derivatives are given by
\begin{align*}
{\partial^2 g(z^2_t)\over \partial \beta_r \partial \tau_s}
&=
\eta_t
\left(\!
{2\over\phi_t}
{\partial\mu_t\over \partial\beta_r}
\, 
{\partial g \over \partial \tau_s}(z_t^2)
-
\eta_t
\,
{\partial^2 g \over \partial \beta_r\partial\tau_s}(z_t^2)
\right)
{d\phi_t\over d \tau_s},
\\[0,2cm]
{\partial^2 g(z^2_t)\over \partial \beta_r \partial a}
&=
\left(
{2\over\phi_t}
{\partial \mu_t\over \partial\beta_r }
{\partial \mu_t\over\partial a }
-
2\eta_t
{\partial^2 \mu_t\over\partial\beta_r\partial a}
\right)
{\partial g\over\partial a}(z_t^2)
-
2\eta_t
{\partial \mu_t\over \partial a}
{\partial^2 g\over \partial\beta_r\partial a}(z_t^2),
\quad a\in\{\kappa_l,\zeta_j\},
%
%\\[0,2cm]
%{\partial^2 g(z^2_t)\over \partial \beta_r \partial \zeta_j}
%&=
%\left(
%{2\over\phi_t}
%\left(
%{\partial v_t\over\partial\beta_r }
%-
%{d \mu_t\over d\beta_r }
%\right)
%{\partial v_t\over\partial\zeta_j }
%+
%2\eta_t
%{\partial^2 v_t\over\partial\beta_r\partial\zeta_j}
%\right)
%{\partial g\over\partial\zeta_j}(z_t^2)
%%\\[0,2cm]
%%& \hspace*{0,5cm}
%+
%2\eta_t
%{\partial v_t\over \partial \zeta_j}
%{\partial^2 g\over \partial\beta_r\partial\zeta_j}(z_t^2),
%%
\\[0,2cm]
{\partial^2 g(z^2_t)\over \partial \tau_s \partial b}
&=
2\eta_t
\left(
{1\over\phi_t}
{d \phi_t\over d\tau_s}
{\partial g\over \partial b}(z_t^2)
-
{\partial^2 g\over \partial \tau_s \partial b}(z_t^2)
\right)
{\partial \mu_t\over \partial b},
\quad 
b\in\{\kappa_l,\zeta_j\},
%
%\\[0,2cm]
%{\partial^2 g(z^2_t)\over \partial \tau_s \partial \zeta_j}
%&=
%-2\eta_t
%\left(
%{1\over\phi_t}
%{d \phi_t\over d\tau_s}
%{\partial g\over \partial \zeta_j}(z_t^2)
%-
%{\partial^2 g\over \partial \tau_s \partial\zeta_j}(z_t^2)
%\right)
%{\partial v_t\over \partial \zeta_j},
%
\\[0,2cm]
{\partial^2 g(z^2_t)\over \partial \kappa_l \partial \zeta_j}
&=
\left(
{2\over \phi_t}
{\partial \mu_t\over \partial \kappa_l}
{\partial \mu_t\over \partial \zeta_j}
-
2\eta_t
{\partial^2 \mu_t\over \partial \kappa_l \partial \zeta_j}
\right)
{\partial g\over \partial \zeta_j}(z^2_t)
-
2\eta_t
{\partial \mu_t\over \partial \zeta_j}
{\partial^2 g\over \partial \kappa_l \partial \zeta_j}(z^2_t)
\end{align*}
with
\[
\begin{array}{llll}
\displaystyle
{d \phi_t\over d\tau_s}
=
w_{ts} 
\left({\partial \Lambda\over \partial\tau_s}(\phi_t)\right)^{-1},
& \qquad 
\displaystyle
{d^2 \phi_t\over d\tau_s^2}
=
-
w_{ts} 
\left({\partial \Lambda\over \partial\tau_s}(\phi_t)\right)^{-2}	
{\partial^2 \Lambda\over \partial\tau_s^2}(\phi_t).
\end{array}
\]

By \eqref{in-der}, the first derivatives of $\mu_t$ are
\begin{align*}
{\partial \mu_t\over \partial \beta_r}
&=
x_{tr}
-
\sum\limits_{l=1}^p
\kappa_l\,
x_{(t-l)r} 
-
\sum\limits_{j=1}^q\zeta_j\, {\partial \mu_{t-j} \over \partial \beta_r},
\\[0,2cm]
{\partial \mu_t\over \partial \kappa_l}
&=
v_{t-l} 
- 
\sum_{i=0}^{k} \beta_i\,x_{(t-l)i} 
- 
\sum\limits_{j=1}^q\zeta_j\,
{\partial \mu_{t-j}\over \partial \kappa_l},
\\[0,2cm]
{\partial \mu_t\over \partial \zeta_j}
&=
v_{t-j}-u_{t-j}
- 
\sum\limits_{\tilde{j}=1}^q\zeta_{\tilde{j}}\,
{\partial \mu_{t-\tilde{j}}\over \partial \zeta_j},
\end{align*}
the second derivatives are given by
\begin{align*}
{\partial^2 \mu_t\over \partial \beta_r^2}
=
-
\sum\limits_{j=1}^q\zeta_j\, {\partial^2 \mu_{t-j} \over \partial \beta_r^2}
,
\qquad 
{\partial^2 \mu_t\over \partial \kappa_l^2}
=
-
\sum\limits_{j=1}^q\zeta_j\,
{\partial^2 \mu_{t-j}\over \partial \kappa_l^2},
\qquad 
{\partial^2 \mu_t\over \partial \zeta_j^2}
=
-
{\partial \mu_{t-j}\over \partial \zeta_j} 
-
\sum\limits_{\tilde{j}=1}^q \zeta_{\tilde{j}}\,
{\partial^2 \mu_{t-\tilde{j}}\over \partial \zeta_j^2},
\end{align*}
with mixed derivatives
\begin{align*}
{\partial^2 \mu_t\over \partial\beta_r \partial\kappa_l}
=
- x_{(t-l)r}
-
\sum\limits_{j=1}^q\zeta_j\,
{\partial^2 \mu_{t-j}\over \partial \beta_r\partial \kappa_l},
\qquad
{\partial^2 \mu_t\over \partial a \partial\zeta_j}
=
-{\partial \mu_{t-j}\over \partial a}
-
\sum\limits_{\tilde{j}=1}^q\zeta_{\tilde{j}}\,
{\partial^2 \mu_{t-\tilde{j}}\over \partial a\partial \zeta_j},
\quad 
a\in\{\beta_r,\kappa_l\}.
%\\[0,2cm]
%{\partial^2 v_t\over \partial\kappa_l \partial\zeta_j}
%&=
%{\partial \mu_{t-l}\over \partial \zeta_j}
%+
%\sum\limits_{l=1}^p\kappa_l\,{\partial^2 v_{t-l}\over \partial\kappa_l\partial \zeta_j}
%+ 
%\sum\limits_{j=1}^q\zeta_j\,
%{\partial^2 r_{t-j}\over \partial\kappa_l\partial \zeta_j}.
\end{align*}

\subsection{Residual analysis}

We assess goodness of fit and departures from the assumptions of the model by using the quantile residual, which is given by 
\[
 r^\textrm{Q}_t = \Phi^{-1}(\widehat{S}(t_t|\mathcal{B}_{t-1})), \quad t={m+1},\ldots,n,
\]
where $\Phi^{-1}$ is the inverse function of the standard normal cumulative distribution function (CDF) and $\widehat{S}$ the fitted survival function. The quantile residual has a standard normal distribution when the model is correctly specified. Note that this residual is usually applied to generalized additive models for location, scale and shape; see~\cite{ds:96}.

\section{Monte Carlo simulation}\label{sec:4}

A Monte Carlo simulation study is carried out to evaluate the performance of the conditional maximum likelihood estimators for the \text{log-symmetric-ARMAX}($1, 1$) model under the 
log-normal (LogN), log-Student-$t$ (Log$t$) and log-power-exponential (LogPE) cases. The simulation scenario considered the following model
$$
\log(Y_t)= \beta_{0}+\beta_{1}x_{t-1}
+
\kappa_1\,\big(\log(Y_{t-1}) - \beta_{0}-\beta_{1}x_{t-1}\big) + \zeta_1\,r_{t-1} 
+
r_t\, \quad t=2,\ldots,n,
$$
where $n \in \{100, 300, 500\}$, $\phi_{t}=\phi \in \{1.00, 2.00, 3.00\}$ for all $t$, $\beta_{0}=1$, $\beta_{1}=0.7$, $\kappa_{1}=0.6$, $\zeta_{1}=0.3$, $\vartheta=0.5$ (LogPE) and 
$\vartheta=4$ (Log$t$). The conditional maximum likelihood estimation results are presented in Tables~\ref{table:MC:LogN}--\ref{table:MC:LogPE}. In particular, bias and mean squared error (MSE) are reported 
in these tables. Note that the results allow us to conclude that, as the sample size increases, the bias and MSE of all the estimators decrease, as expected. In general, bias and MSE associated with the conditional maximum likelihood estimates of the Log$t$-ARMAX($1, 1$) model, present the lowest values.

\begin{table}[!ht]
%\begin{sidewaystable}
\footnotesize
\centering 
 \renewcommand{\arraystretch}{0.8}
 \renewcommand{\tabcolsep}{0.2cm}
\caption{
Empirical bias and MSE (in parentheses) from simulated data for the indicated conditional maximum likelihood estimators of the LogN-ARMAX($1, 1$) model.} \label{table:MC:LogN}
\begin{adjustbox}{max width=\textwidth}
  \begin{tabular}{llrrrrrrrrr}
	\hline\vspace{-0.2cm} \\ 
$n$&                  & \multicolumn{2}{c}{$\phi=0.5$} && \multicolumn{2}{c}{$\phi=1$} && \multicolumn{2}{c}{$\phi=2$}\\\cline{3-4} \cline{6-7}\cline{9-10}\vspace{-0.2cm}\\
&                     &  Bias    & MSE   &&  Bias    &   MSE &&  Bias    &  MSE\\\cline{3-10} \vspace{-0.2cm}\\
100&$\widehat\phi$    &$-$0.0257 &0.0055 &&$-$0.0514 &0.0218 &&$-$0.1028 &0.0873\\
&$\widehat\beta_{0}$  &$-$0.0072 &0.0631 &&$-$0.0102 &0.1261 &&$-$0.0144 &0.2522\\
&$\widehat\beta_{1}$  &  0.0021  &0.0705 && 0.0030   &0.1410 &&   0.0042 &0.2820\\
&$\widehat\kappa_{1}$ &$-$0.0394 &0.0150 &&$-$0.0394 &0.0150 &&$-$0.0394 &0.0150\\
&$\widehat\zeta_{1}$  &  0.0326  &70.0197&& 0.0326   &0.0197 &&   0.0326 &0.0197\\[0.5ex]
300&$\widehat\phi$    &-0.0093  &0.0018 &&-0.0185 &0.0070 &&-0.0371 &0.0281\\
&$\widehat\beta_{0}$  &  0.0063 &0.0224 && 0.0089 &0.0449 && 0.0126 &0.0897\\
&$\widehat\beta_{1}$  & -0.0136 &0.0234 &&-0.0192 &0.0468 &&-0.0272 &0.0937\\
&$\widehat\kappa_{1}$ & -0.0156 &0.0046 &&-0.0156 &0.0046 &&-0.0156 &0.0046\\
&$\widehat\zeta_{1}$  &  0.0118 &0.0066 && 0.0118 &0.0066 && 0.0118 &0.0066\\[0.5ex]
500&$\widehat\phi$    &-0.0055  &0.0010 &&-0.0110 &0.0041 &&-0.0221 &0.0162\\
&$\widehat\beta_{0}$  &  0.0029 &0.0138 && 0.0041 &0.0275 && 0.0059 &0.0550\\
&$\widehat\beta_{1}$  & -0.0106 &0.0129 &&-0.0149 &0.0258 &&-0.0211 &0.0516\\
&$\widehat\kappa_{1}$ & -0.0077 &0.0025 &&-0.0077 &0.0025 &&-0.0077 &0.0025\\
&$\widehat\zeta_{1}$  &  0.0072 &0.0039 && 0.0072 &0.0039 && 0.0072 &0.0039\\
\hline
\end{tabular}
\end{adjustbox}
%\end{sidewaystable}
\end{table}

\begin{table}[!ht]
%\begin{sidewaystable}
\footnotesize
\centering 
 \renewcommand{\arraystretch}{0.8}
 \renewcommand{\tabcolsep}{0.2cm}
\caption{
Empirical bias and MSE (in parentheses) from simulated data for the indicated conditional maximum likelihood estimators of the Log$t$-ARMAX($1, 1$) model.} \label{table:MC:Logt}
\begin{adjustbox}{max width=\textwidth}
  \begin{tabular}{llrrrrrrrrr}
	\hline\vspace{-0.2cm} \\ 
$n$&                  & \multicolumn{2}{c}{$\phi=0.5$} && \multicolumn{2}{c}{$\phi=1$} && \multicolumn{2}{c}{$\phi=2$}\\\cline{3-4} \cline{6-7}\cline{9-10}\vspace{-0.2cm}\\
&                     &  Bias      & MSE   &&  Bias      &   MSE  &&  Bias  &  MSE\\\cline{3-10} \vspace{-0.2cm}\\
100&$\widehat\phi$    & $-$0.0162  &0.0091 &&$-$0.0323   &0.0364  &&$-$0.0646 &0.1456 \\
&$\widehat\beta_{0}$  & $-$0.0107  &0.1113 &&$-$0.0152   &0.2226  &&$-$0.0215 &0.4453\\
&$\widehat\beta_{1}$  & $-$0.0032  &0.0981 &&$-$0.0046   &0.1961  &&$-$0.0065 &0.3923\\
&$\widehat\kappa_{1}$ & $-$0.0389  &0.0125 && $-$0.0389  &0.0125  &&$-$0.0389 &0.0125\\
&$\widehat\zeta_{1}$  &  0.0302    &0.0157 && 0.0302     &0.0157  &&   0.0302 &0.0157\\[0.5ex]
300&$\widehat\phi$    & $-$0.0067 &0.0031 &&$-$0.0134 &0.0124 &&$-$0.0269 & 0.0496\\
&$\widehat\beta_{0}$  & $-$0.0022 &0.0378 &&$-$0.0032 &0.0757 &&$-$0.0045 &0.1513\\
&$\widehat\beta_{1}$  & $-$0.0038 &0.0328 &&$-$0.0054 &0.0656 &&$-$0.0076 &0.1312\\
&$\widehat\kappa_{1}$ & $-$0.0129 &0.0033 &&$-$0.0129 &0.0033 &&$-$0.0129 &0.0033\\
&$\widehat\zeta_{1}$  &  0.0101   &0.0047 &&   0.0101 &0.0047 &&   0.0101 &0.0047\\[0.5ex]
500&$\widehat\phi$    & $-$0.0049  &0.0018 &&$-$0.0099 &0.0072 &&$-$0.0198 &0.0289\\
&$\widehat\beta_{0}$  & $-$0.0059  &0.0219 &&$-$0.0083 &0.0437 &&$-$0.0117 &0.0875\\
&$\widehat\beta_{1}$  & $-$0.0040  &0.0175 &&$-$0.0057 &0.0350 &&$-$0.0081 &0.0701\\
&$\widehat\kappa_{1}$ & $-$0.0083  &0.0020 &&$-$0.0083 &0.0020 &&$-$0.0083 &0.0020\\
&$\widehat\zeta_{1}$  &    0.0066  &0.0029 &&   0.0066 &0.0029 &&   0.0066 &0.0029\\
\hline
\end{tabular}
\end{adjustbox}
%\end{sidewaystable}
\end{table}

\begin{table}[!ht]
%\begin{sidewaystable}
\footnotesize
\centering 
 \renewcommand{\arraystretch}{0.8}
 \renewcommand{\tabcolsep}{0.2cm}
\caption{
Empirical bias and MSE (in parentheses) from simulated data for the indicated conditional maximum likelihood estimators of the LogPE-ARMAX($1, 1$) model.} \label{table:MC:LogPE}
\begin{adjustbox}{max width=\textwidth}
  \begin{tabular}{llrrrrrrrrr}
	\hline\vspace{-0.2cm} \\ 
$n$&                  & \multicolumn{2}{c}{$\phi=0.5$} && \multicolumn{2}{c}{$\phi=1$} && \multicolumn{2}{c}{$\phi=2$}\\\cline{3-4} \cline{6-7}\cline{9-10}\vspace{-0.2cm}\\
&                     &  Bias   & MSE   &&  Bias    &   MSE &&  Bias    &    MSE\\\cline{3-10} \vspace{-0.2cm}\\
100&$\widehat\phi$    &$-$0.0217&0.0080 &&$-$0.0434 &0.0321 &&$-$0.0868 &0.1286\\
&$\widehat\beta_{0}$  &0.0067   &0.1690 && 0.0095   &0.3380 && 0.0135   &0.6759\\
&$\widehat\beta_{1}$  &$-$0.0193&0.1566 &&$-$0.0273 &0.3132 &&$-$0.0387 &0.6263\\
&$\widehat\kappa_{1}$ &$-$0.0432&0.0148 &&$-$0.0432 &0.0148 &&$-$ 0.0432&0.0148\\
&$\widehat\zeta_{1}$  &0.0348   &0.0187 && 0.0348   &0.0187 && 0.0348   &0.0187\\[0.5ex]
300&$\widehat\phi$    & $-$0.0076 &0.0026&& $-$0.0151 &0.0103&& $-$0.0302 &0.0411\\
&$\widehat\beta_{0}$  &  0.0019   &0.0524&&  0.0027   &0.1047&&  0.0039   &0.2095\\
&$\widehat\beta_{1}$  &  0.0049   &0.0477&&  0.0069   &0.0955&&  0.0097   &0.1910\\
&$\widehat\kappa_{1}$ & $-$0.0135 &0.0036&& $-$0.0135 &0.0036&& $-$0.0135 &0.0036\\
&$\widehat\zeta_{1}$  &  0.0118   &0.0056&&  0.0118   &0.0056&&  0.0118   &0.0056\\[0.5ex]
500&$\widehat\phi$    & $-$0.0057 &0.0016 &&$-$0.0114 &0.0064 &&$-$0.0229 &0.0256\\
&$\widehat\beta_{0}$  &  0.0028   &0.0335 && 0.0040   &0.0670 &&   0.0056 &0.1340\\
&$\widehat\beta_{1}$  & $-$0.0021 &0.0258 &&$-$0.0029 &0.0515 &&$-$0.0041 &0.1031\\
&$\widehat\kappa_{1}$ & $-$0.0080 &0.0021 &&$-$0.0080 &0.0021 &&$-$0.0080 &0.0021\\
&$\widehat\zeta_{1}$  &  0.0070   &0.0031 && 0.0071   &0.0031 &&   0.0071 &0.0031\\
\hline
\end{tabular}
\end{adjustbox}
%\end{sidewaystable}
\end{table}

\section{Illustrative example}\label{sec:5}
The log-symmetric regression and log-symmetric-ARMAX models are now used to analyze a real-world data set, regarding the possible effects of temperature and pollution on weekly mortality in Los
Angeles County over the 10 year period 1970-1979; see \cite{shusto:17}. We have the following variables from this data set: cardiovascular mortality (response), temperature (covariate) and particulate levels (covariate). Figure \ref{fig:corr-mortality} displays scatter-plots with their corresponding correlations for all these variables presented. From this figure, we detect adequate levels of correlation between the response and the covariates, justifying the use of a linear regression model.

\begin{figure}[!ht]
\vspace{-1cm}
\centering
\psfrag{0}[c][c]{\tiny{0}}
\psfrag{5}[c][c]{\tiny{5}}
\psfrag{10}[c][c]{\tiny{10}}
\psfrag{15}[c][c]{\tiny{15}}
\psfrag{50}[c][c]{\tiny{50}}
\psfrag{60}[c][c]{\tiny{60}}
\psfrag{70}[c][c]{\tiny{70}}
\psfrag{80}[c][c]{\tiny{80}}
\psfrag{90}[c][c]{\tiny{90}}
\psfrag{100}[c][c]{\tiny{100}}
\psfrag{110}[c][c]{\tiny{110}}
\psfrag{120}[c][c]{\tiny{120}}
\psfrag{130}[c][c]{\tiny{130}}
\psfrag{M}[c][c]{\small Mortality}
\psfrag{T}[c][c]{\small Temperature}
\psfrag{P}[c][c]{\small Particulates}
{\includegraphics[height=10cm,width=10cm]{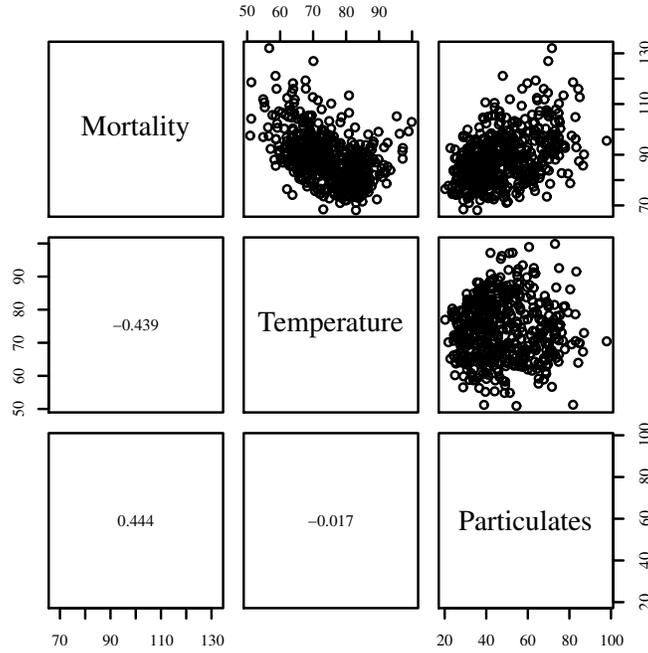}}
\vspace{-1cm}
\caption{Scatterplots and their correlations for the indicated variables with the mortality data.}%
\label{fig:corr-mortality}
\end{figure}

Table~\ref{tab:desc} provides descriptive statistics for the mortality data set, including central tendency statistics, standard
deviation (SD), coefficients of variation (CV), skewness (CS) and kurtosis (CK). From this table, note the presence of skewness and kurtosis in the data distribution; see Figure~\ref{fig:morfigs01}(centre). Note also the presence of autocorrelation; see Figure~\ref{fig:morfigs01}(right).

\begin{table}[!ht]
\footnotesize
\centering
\caption{\small Summary statistics for the mortality data.}\label{tab:desc}
\begin{tabular}{cccccccccc}
\hline
$n$  & Minimum    & Median & Mean     &Maximum      &SD     &CV        & CS      &CK\\
\hline
508  & 68.11      & 87.33  & 88.699   & 132.04      & 9.999 & 11.273\% & 0.804   & 0.981\\
\hline
\end{tabular}
\end{table}

\vspace{-0.2cm}
\begin{figure}[htbp]
\centering
\psfrag{0}[c][c]{\tiny{0}}
\psfrag{5}[c][c]{\tiny{5}}
\psfrag{10}[c][c]{\tiny{10}}
\psfrag{15}[c][c]{\tiny{15}}
\psfrag{50}[c][c]{\tiny{50}}
\psfrag{60}[c][c]{\tiny{60}}
\psfrag{70}[c][c]{\tiny{70}}
\psfrag{80}[c][c]{\tiny{80}}
\psfrag{90}[c][c]{\tiny{90}}
\psfrag{100}[c][c]{\tiny{100}}
\psfrag{110}[c][c]{\tiny{110}}
\psfrag{120}[c][c]{\tiny{120}}
\psfrag{130}[c][c]{\tiny{130}}
\psfrag{1970}[c][c]{\tiny{1970}}
\psfrag{1972}[c][c]{\tiny{1972}}
\psfrag{1974}[c][c]{\tiny{1974}}
\psfrag{1976}[c][c]{\tiny{1976}}
\psfrag{1978}[c][c]{\tiny{1978}}
\psfrag{1980}[c][c]{\tiny{1980}}
\psfrag{-0.2}[c][c]{\tiny{$-$0.2}}
\psfrag{0.0}[c][c]{\tiny{0.0}}
\psfrag{0.2}[c][c]{\tiny{0.2}}
\psfrag{0.1}[c][c]{\tiny{0.1}}
\psfrag{0.3}[c][c]{\tiny{0.3}}
\psfrag{0.4}[c][c]{\tiny{0.4}}
\psfrag{0.5}[c][c]{\tiny{0.5}}
\psfrag{0.6}[c][c]{\tiny{0.6}}
\psfrag{0.7}[c][c]{\tiny{0.7}}
\psfrag{0.8}[c][c]{\tiny{0.8}}
\psfrag{1.0}[c][c]{\tiny{1.0}}
\psfrag{20}[c][c]{\tiny{20.0}}
\psfrag{0.00}[c][c]{\tiny{0.00}}
\psfrag{0.01}[c][c]{\tiny{0.01}}
\psfrag{0.02}[c][c]{\tiny{0.02}}
\psfrag{0.03}[c][c]{\tiny{0.03}}
\psfrag{0.04}[c][c]{\tiny{0.04}}
\psfrag{0.05}[c][c]{\tiny{0.05}}
\psfrag{0.10}[c][c]{\tiny{0.10}}
\psfrag{0.15}[c][c]{\tiny{0.15}}
\psfrag{0.20}[c][c]{\tiny{0.20}}
\psfrag{0.25}[c][c]{\tiny{0.25}}
\psfrag{0.30}[c][c]{\tiny{0.30}}
\psfrag{Y}[c][c]{\tiny{Mortality data}}
\psfrag{I}[c][c]{\tiny{Index}}
\psfrag{L}[c][c]{\tiny{Lag}}
\psfrag{A}[c][c]{\tiny{Autocorrelation function}}
\psfrag{p}[c][c]{\tiny{Density}}
\psfrag{d}[c][c]{\tiny{Mortality data}}
{\includegraphics[height=4.0cm,width=4.0cm]{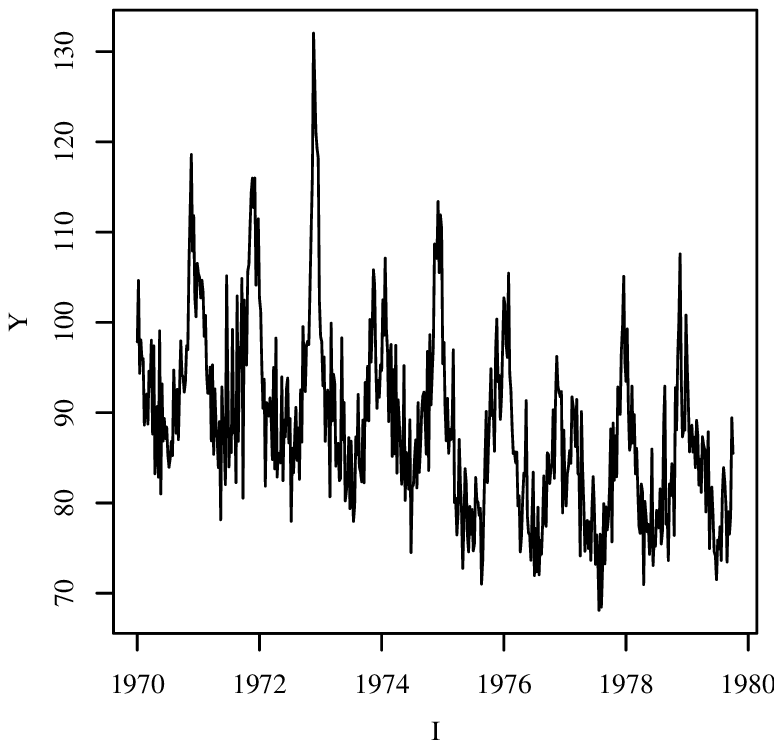}}
{\includegraphics[height=4.0cm,width=4.0cm]{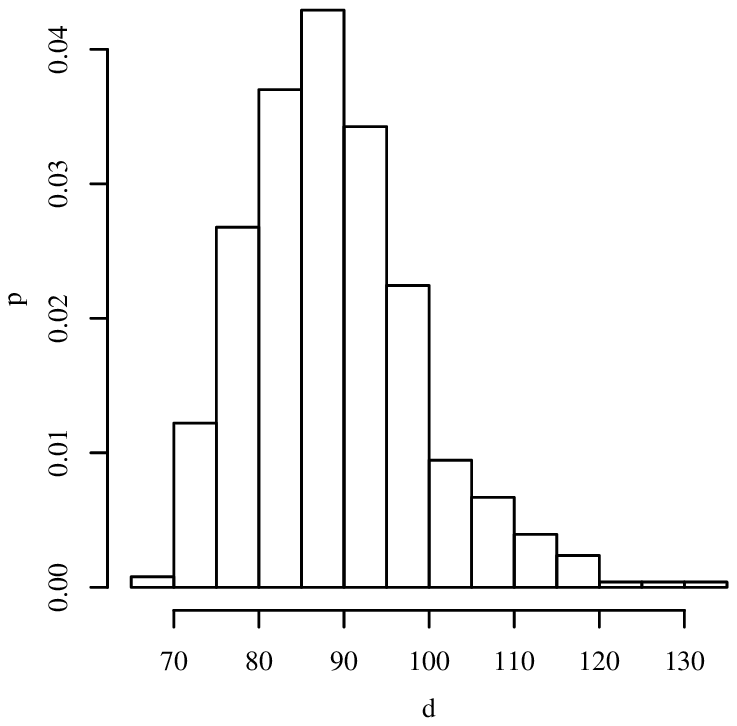}}
{\includegraphics[height=4.0cm,width=4.0cm]{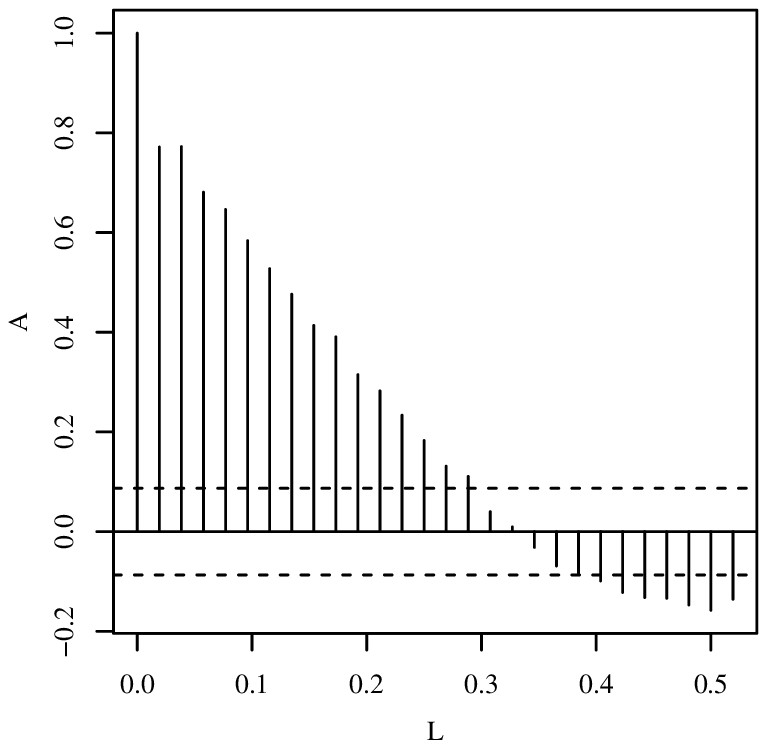}}
\vspace{-0.2cm}
 \caption{\small Timeplot (left), histogram (centre) and autocorrelation (right) function for the mortality data.}
  \label{fig:morfigs01}
\end{figure}

\subsection{Log-symmetric regression results}
We estimate three log-symmetric regression models based on the following special cases: LogN, Log$t$ and LogPE. Based on the scatterplots and timeplot 
shown in Figures \ref{fig:corr-mortality} and \ref{fig:morfigs01}(right) and \cite{shusto:17}, we can set the following final variables: [response] $Y_1$ (mortality) and [covariates] $x_1$ (linear trend), $x_2$ (temperature), $x_3$ (squared temperature) and $X_4$ (particulates). We consider $\phi_{i}=\phi$ for $i=1,\ldots,n$.

Table~\ref{tab:estimates01} reports the estimates, SEs and $p$-values of the $t$-test for the log-symmetric regression model parameters. Furthermore, we report the Akaike (AIC) and Bayesian information (BIC) criteria and the root mean square error (RMSE) to compare the fitted models. From Table~\ref{tab:estimates01}, the three log-symmetric models provide virtually the same adjustments based on the values of RMSE, AIC and BIC. However, the QQ plots with simulated envelope of the quantile residuals for these models show good agreement with the N(0,\,1) distribution only in the 
LogN and Log$t$ regression models; see Figure~\ref{fig:QQRes_ex01}. Nevertheless, the three log-symmetric regression models produce autocorrelated quantile residuals, as shown in Figure~\ref{fig:QQRes_ex01}. Note that the sample autocorrelation and partial autocorrelation functions of the quantile residuals shown in this figure suggest an AR(2) model for the residuals. Thus, a pure log-regression regression model is not adequate and an structure to accommodate correlation is necessary.

\begin{table}[!ht]
\footnotesize
\centering
\renewcommand{\arraystretch}{0.9}
\renewcommand{\tabcolsep}{0.1cm}
\caption{\small {Estimates (with SE in parentheses) and model selection measures for fit to the mortality data.}}\label{tab:estimates01}
\begin{tabular}{llcrcccccccccc}
\hline
Model                &              & Parameter     & \multicolumn{1}{c}{ML estimate}       & $p$-value & RMSE      &   AIC     &   BIC    \\
\hline\\[-0.25cm]
LogN regression model&              & $\beta_{0}$   & 35.4616(2.1630)                       &$<$0.0001  &  0.0692  &$-$1259.696&$-$1234.313     \\
                     &              & $\beta{1}$    &$-$0.0157(0.0011)                      &$<$0.0001  &          &           &              \\
                     &              & $\beta_{2}$   &$-$0.0051(0.0003)                      &$<$0.0001  &          &           &              \\
                     &              & $\beta{3}$    &   0.0002($<$0.0001)                   &$<$0.0001  &          &           &              \\
                     &              & $\beta_{4}$   &   0.0027(0.0002)                      &$<$0.0001  &          &           &              \\                    
                     &              & $\log(\phi)$  & $-$5.3412(0.0627)                     &           &          &           &              \\\hline\\[-0.25cm]%\cline{2-8}
Log$t$ regression model&            & $\beta_{0}$   & 35.4064(2.1630)                       &$<$0.0001  &  0.0692  &$-$1260.442&$-$1235.059     \\
                     &              & $\beta{1}$    &$-$0.0157(0.0011)                      &$<$0.0001  &          &           &              \\
                     &              & $\beta_{2}$   &$-$0.0051(0.0003)                      &$<$0.0001  &          &           &              \\
                     &              & $\beta{3}$    &   0.0002($<$0.0001)                   &$<$0.0001  &          &           &              \\
                     &              & $\beta_{4}$   &   0.0027(0.0002)                      &$<$0.0001  &          &           &              \\                    
                     &              & $\log(\phi)$  & $-$5.5567(0.0725)                     &           &          &           &              \\
                     &              & $\vartheta$   & 9                                     &           &          &           &              \\\hline\\[-0.25cm]%\cline{2-8}
LogPE regression model&             & $\beta_{0}$   & 35.6571(2.1205)                       &$<$0.0001  &  0.0692  &$-$1260.376&$-$1234.994     \\
                     &              & $\beta{1}$    &$-$0.0158(0.0011)                      &$<$0.0001  &          &           &              \\
                     &              & $\beta_{2}$   &$-$0.0051(0.0003)                      &$<$0.0001  &          &           &              \\
                     &              & $\beta{3}$    &   0.0002($<$0.0001)                   &$<$0.0001  &          &           &              \\
                     &              & $\beta_{4}$   &   0.0027(0.0002)                      &$<$0.0001  &          &           &              \\                    
                     &              & $\log(\phi)$  & $-$5.7707(0.0725)                     &           &          &           &              \\
                     &              & $\vartheta$   & 0.24                                  &           &          &           &              \\\hline\\[-0.25cm]%\cline{2-8}
\end{tabular}
\end{table}

\begin{figure}[!ht]
\centering

\psfrag{-1}[c][c]{\tiny{$-$1}}
\psfrag{-2}[c][c]{\tiny{$-$2}}
\psfrag{-3}[c][c]{\tiny{$-$3}}
\psfrag{-4}[c][c]{\tiny{$-$4}}
\psfrag{0}[c][c]{\tiny{0}}
\psfrag{1}[c][c]{\tiny{1}}
\psfrag{2}[c][c]{\tiny{2}}
\psfrag{3}[c][c]{\tiny{3}}
\psfrag{4}[c][c]{\tiny{4}}
\psfrag{5}[c][c]{\tiny{5}}
\psfrag{6}[c][c]{\tiny{6}}
\psfrag{8}[c][c]{\tiny{8}}
\psfrag{-}[c][c]{\tiny{$-$}}
\psfrag{2.0}[c][c]{\tiny{2.0}}
\psfrag{2.5}[c][c]{\tiny{2.5}}
\psfrag{-0.1}[c][c]{\tiny{$-$0.1}}
\psfrag{-2.0}[c][c]{\tiny{$-$2.0}}
\psfrag{-2.5}[c][c]{\tiny{$-$2.5}}
\psfrag{0.0}[c][c]{\tiny{0.0}}
\psfrag{0.1}[c][c]{\tiny{0.1}}
\psfrag{0.2}[c][c]{\tiny{0.2}}
\psfrag{0.3}[c][c]{\tiny{0.3}}
\psfrag{0.4}[c][c]{\tiny{0.4}}
\psfrag{0.5}[c][c]{\tiny{0.5}}
\psfrag{0.6}[c][c]{\tiny{0.6}}
\psfrag{0.7}[c][c]{\tiny{0.7}}
\psfrag{0.8}[c][c]{\tiny{0.8}}
\psfrag{0.9}[c][c]{\tiny{0.9}}
\psfrag{1.0}[c][c]{\tiny{1.0}}
\psfrag{10}[c][c]{\tiny{10.0}}
\psfrag{15}[c][c]{\tiny{15.0}}
\psfrag{20}[c][c]{\tiny{20.0}}
\psfrag{l}[c][c]{\tiny{Lag}}
\psfrag{ac}[c][c]{\tiny{Autocorrelation function}}
\psfrag{pc}[c][c]{\tiny{Partial autocorrelation function}}
\psfrag{EQ}[c][c]{\tiny{Empirical quantile}}
\psfrag{TQ}[c][c]{\tiny{Theoretical quantile}}
\subfigure[LogN regression]{\includegraphics[height=3.8cm,width=3.8cm]{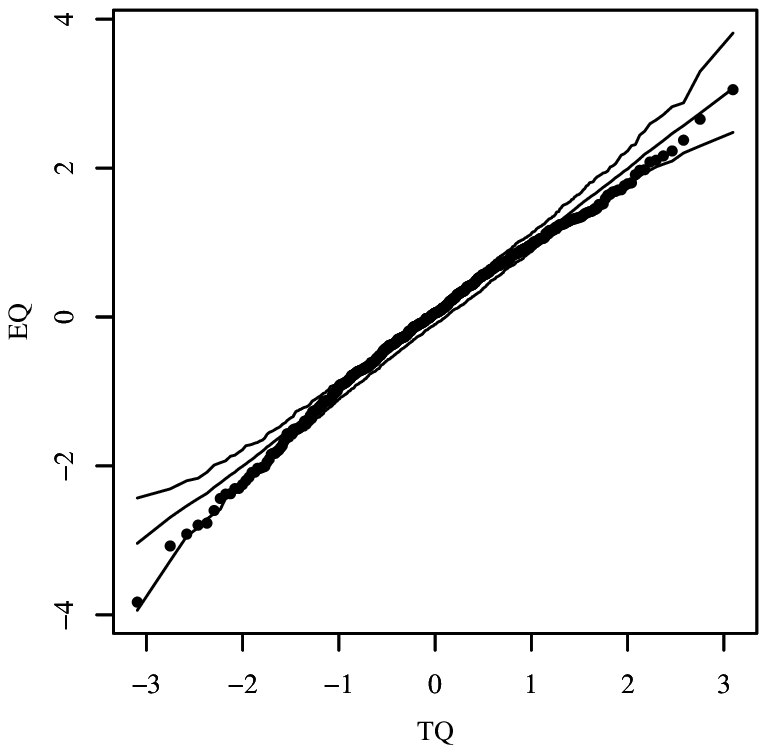}}
\subfigure[LogN regression]{\includegraphics[height=3.8cm,width=3.8cm]{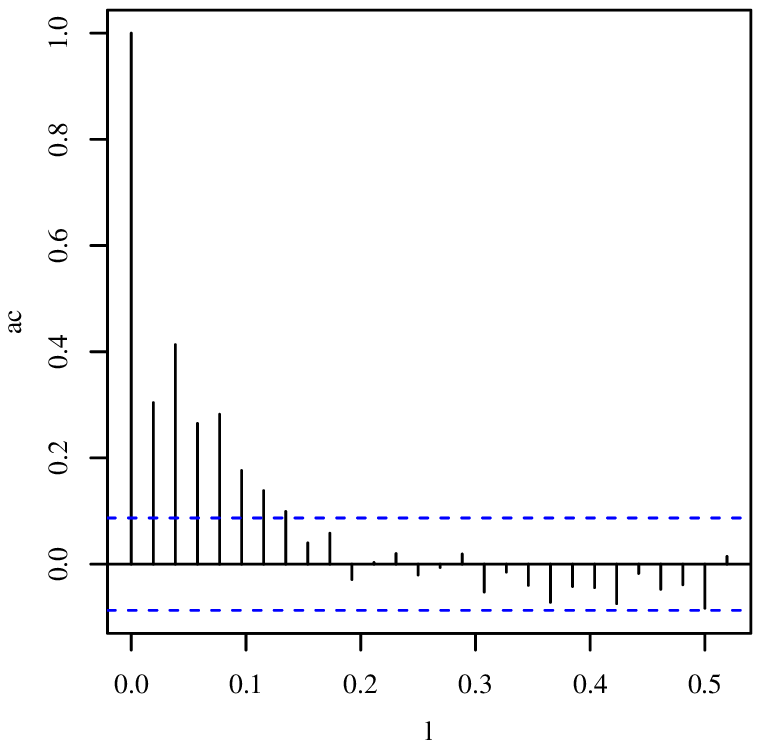}}
\subfigure[LogN regression]{\includegraphics[height=3.8cm,width=3.8cm]{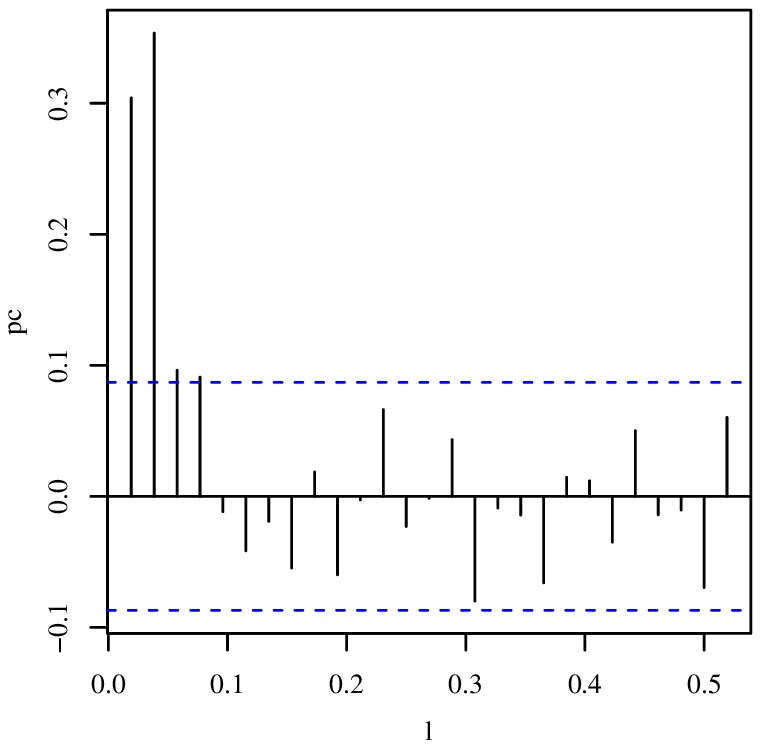}}\\
\subfigure[Log$t$ regression]{\includegraphics[height=3.8cm,width=3.8cm]{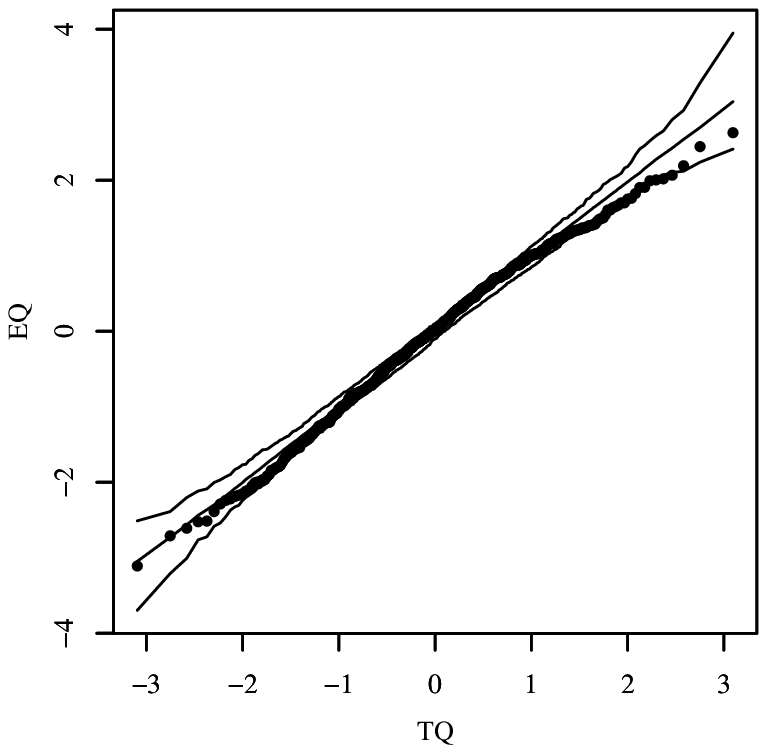}}
\subfigure[Log$t$ regression]{\includegraphics[height=3.8cm,width=3.8cm]{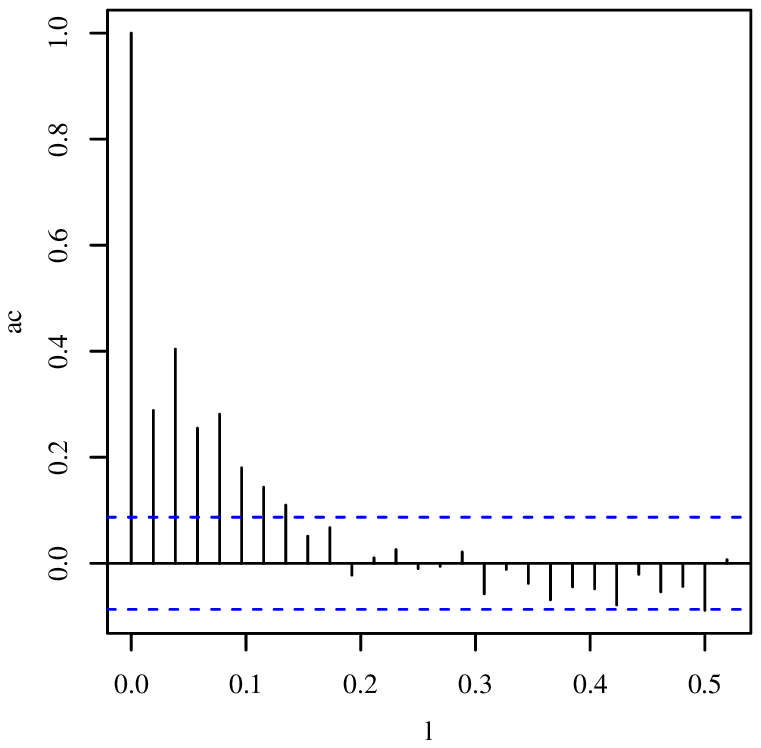}}
\subfigure[Log$t$ regression]{\includegraphics[height=3.8cm,width=3.8cm]{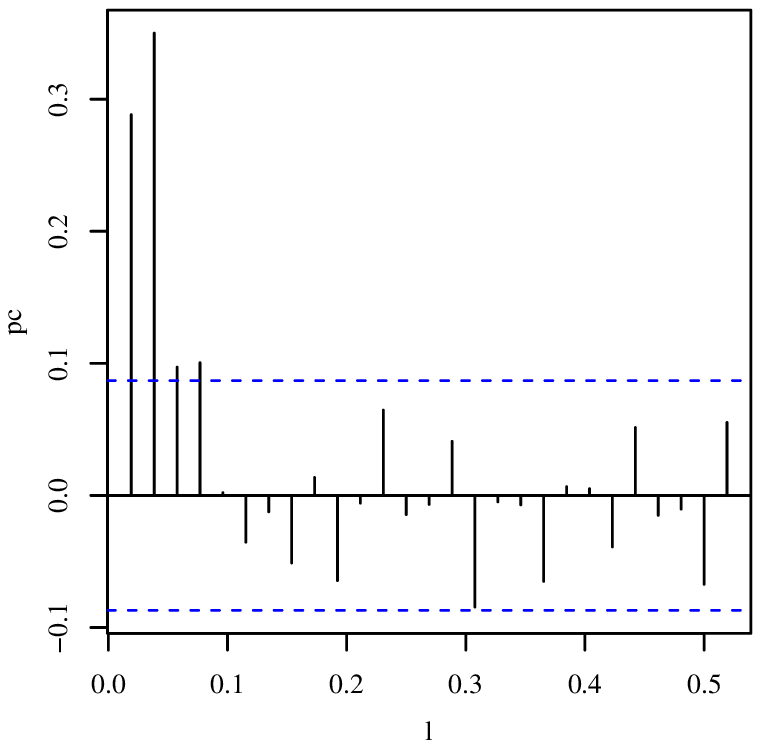}}\\
\subfigure[LogPE regression]{\includegraphics[height=3.8cm,width=3.8cm]{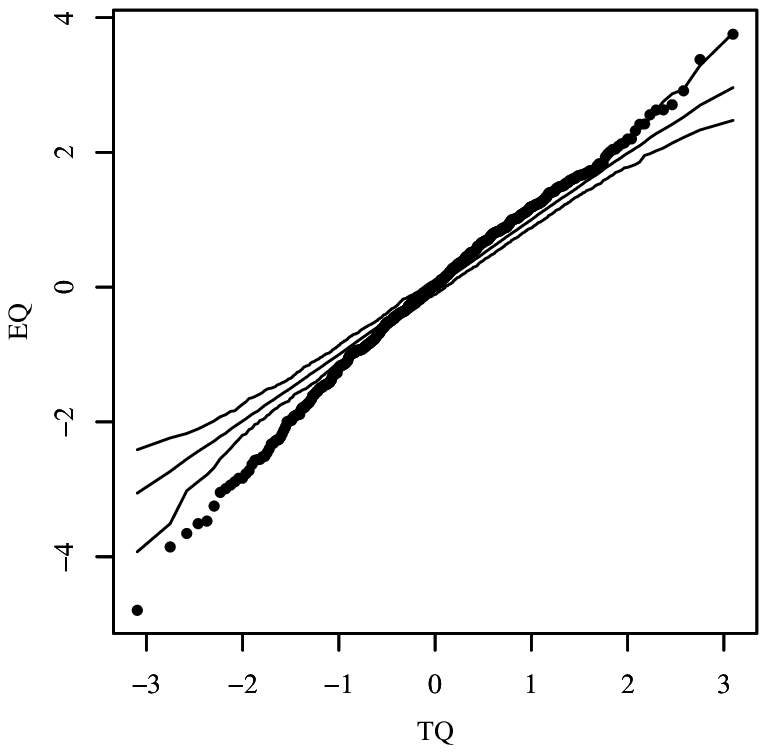}}
\subfigure[LogPE regression]{\includegraphics[height=3.8cm,width=3.8cm]{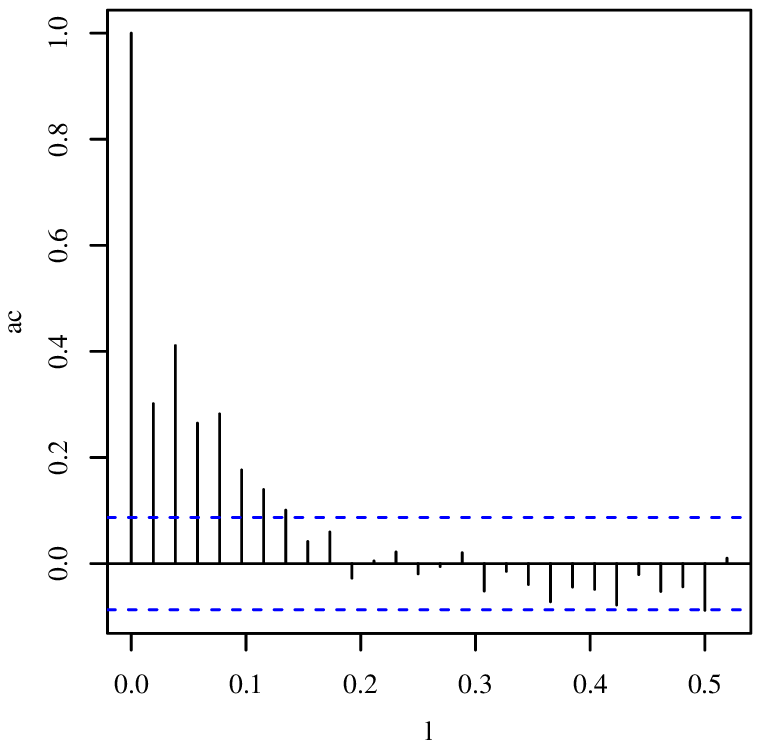}}
\subfigure[LogPE regression]{\includegraphics[height=3.8cm,width=3.8cm]{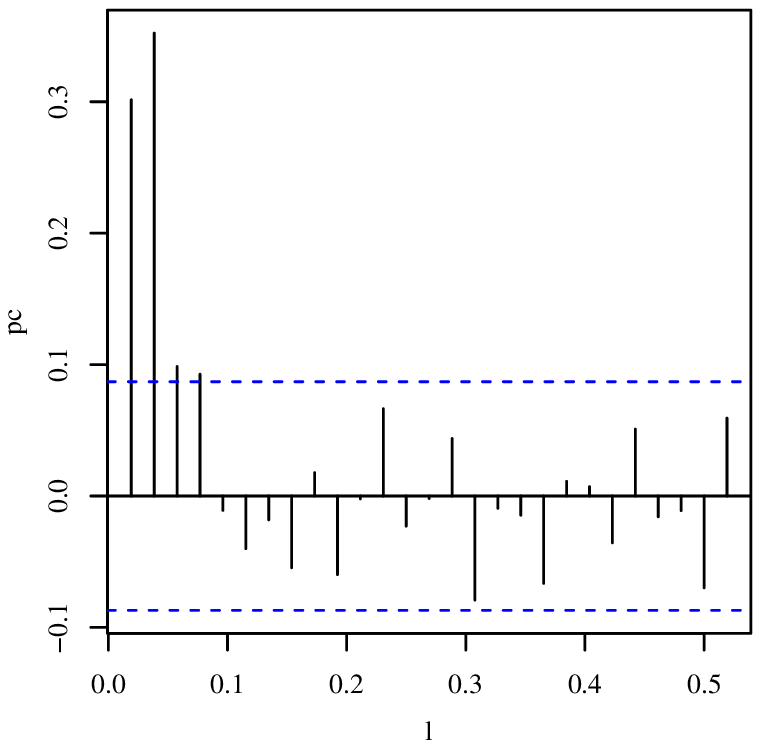}}\\
\vspace{-0.2cm}
 \caption{\small QQ plot and its envelope for the quantile residual, and sample autocorrelation and partial autocorrelation functions of the quantile residuals for 
 the indicated model with the mortality data.}
  \label{fig:QQRes_ex01}
\end{figure}

\subsection{Log-symmetric-ARMAX results}

Now, we present the results based on the proposed log-symmetric-ARMAX model. We also consider $\phi_{t}=\phi$ for $t=1,\ldots,n$. Table~\ref{tab:estimates01} reports the estimates, SEs and $p$-values of the $t$-test for the log-symmetric ARMAX model parameters, as well as the values of AIC, BIC and RMSE. From this table, note that the LogN-ARMAX(2,0) model provides better adjustment compared to the other models based on the values of RMSE, AIC and BIC. Figure~\ref{fig:QQRes_ex02} displays the QQ plots with simulated envelope of the quantile residual for the log-symmetric-ARMAX models. The figure shows that these residuals provide a good agreement with the EXP(1) distribution for the LogN-ARMAX(2,0) model. Note also that all three log-symmetric ARMAX models produce non-autocorrelated residuals according to the sample autocorrelation and partial autocorrelation functions. This result supports the importance of a model which takes into account serial correlation.

\begin{table}[!ht]
\footnotesize
\centering
\renewcommand{\arraystretch}{0.9}
\renewcommand{\tabcolsep}{0.1cm}
\caption{\small {Estimates (with SE in parentheses) and model selection measures for fit to the mortality data.}}\label{tab:estimates02}
\begin{tabular}{llcrcccccccccc}
\hline
Model                &              & Parameter     & \multicolumn{1}{c}{ML estimate}       & $p$-value & RMSE      &   AIC     &   BIC    \\
\hline\\[-0.25cm]
LogN-ARMAX(2,0) model&              & $\kappa_{1}$  &    0.4050(0.0441)                     &           & 0.0547   &$-$1487.895&$-$1454.051  \\
                     &              & $\kappa_{2}$  &    0.2789(0.0452)                     &           &          &           &              \\
                     &              & $\beta_{0}$   & 38.1026(2.1288)                       &$<$0.0001  &          &           &     \\
                     &              & $\beta{1}$    &$-$0.0170(0.0011)                      &$<$0.0001  &          &           &              \\
                     &              & $\beta_{2}$   &$-$0.0017(0.0004)                      &$<$0.0001  &          &           &              \\
                     &              & $\beta{3}$    &   0.0002($<$0.0001)                   &$<$0.0001  &          &           &              \\
                     &              & $\beta_{4}$   &   0.0023(0.0002)                      &$<$0.0001  &          &           &              \\                    
                     &              & $\phi$        &   0.0023(0.0003)                      &           &          &           &              \\\hline\\[-0.25cm]%\cline{2-8}
Log$t$-ARMAX(2,0) model &           & $\kappa_{1}$  &    0.4527(0.0429)                     &           & 0.0550   &$-$1479.249&$-$1445.406  \\
                     &              & $\kappa_{2}$  &    0.2637(0.0415)                     &           &          &           &              \\
                     &              & $\beta_{0}$   & 38.0715(2.0071)                       &$<$0.0001  &          &           &     \\
                     &              & $\beta{1}$    &$-$0.0170(0.0010)                      &$<$0.0001  &          &           &              \\
                     &              & $\beta_{2}$   &$-$0.0015(0.0004)                      &$<$0.0001  &          &           &              \\
                     &              & $\beta{3}$    &   0.0002($<$0.0001)                   &$<$0.0001  &          &           &              \\
                     &              & $\beta_{4}$   &   0.0021(0.0002)                      &$<$0.0001  &          &           &              \\                    
                     &              & $\phi$        &   0.0025(0.0003)                      &           &          &           &              \\
                     &              & $\vartheta$   &   9                                   &           &          &           &              \\\hline\\[-0.25cm]%\cline{2-8}
LogPE-ARMAX(2,0) model&             & $\kappa_{1}$  &    0.3937(0.0452)                     &           & 0.0548   &$-$1486.880&$-$1453.037  \\
                     &              & $\kappa_{2}$  &    0.2917(0.0465)                     &           &          &           &              \\
                     &              & $\beta_{0}$   & 38.1097(2.0483)                       &$<$0.0001  &          &           &     \\
                     &              & $\beta{1}$    &$-$0.0170(0.0010)                      &$<$0.0001  &          &           &              \\
                     &              & $\beta_{2}$   &$-$0.0016(0.0004)                      &$<$0.0001  &          &           &              \\
                     &              & $\beta{3}$    &   0.0002($<$0.0001)                   &$<$0.0001  &          &           &              \\
                     &              & $\beta_{4}$   &   0.0023(0.0002)                      &$<$0.0001  &          &           &              \\                    
                     &              & $\phi$        &   0.0020(0.0002)                      &           &          &           &              \\
                     &              & $\vartheta$   &   0.24                                 &           &          &           &              \\\hline\\[-0.25cm]%\cline{2-8}
\end{tabular}
\end{table}

\begin{figure}[!ht]
\centering

\psfrag{-1}[c][c]{\tiny{$-$1}}
\psfrag{-2}[c][c]{\tiny{$-$2}}
\psfrag{-3}[c][c]{\tiny{$-$3}}
\psfrag{-4}[c][c]{\tiny{$-$4}}
\psfrag{0}[c][c]{\tiny{0}}
\psfrag{1}[c][c]{\tiny{1}}
\psfrag{2}[c][c]{\tiny{2}}
\psfrag{3}[c][c]{\tiny{3}}
\psfrag{4}[c][c]{\tiny{4}}
\psfrag{5}[c][c]{\tiny{5}}
\psfrag{6}[c][c]{\tiny{6}}
\psfrag{8}[c][c]{\tiny{8}}
\psfrag{-}[c][c]{\tiny{$-$}}
\psfrag{2.0}[c][c]{\tiny{2.0}}
\psfrag{2.5}[c][c]{\tiny{2.5}}
\psfrag{-0.1}[c][c]{\tiny{$-$0.1}}
\psfrag{-2.0}[c][c]{\tiny{$-$2.0}}
\psfrag{-2.5}[c][c]{\tiny{$-$2.5}}
\psfrag{0.0}[c][c]{\tiny{0.0}}
\psfrag{0.1}[c][c]{\tiny{0.1}}
\psfrag{0.2}[c][c]{\tiny{0.2}}
\psfrag{0.3}[c][c]{\tiny{0.3}}
\psfrag{0.4}[c][c]{\tiny{0.4}}
\psfrag{0.5}[c][c]{\tiny{0.5}}
\psfrag{0.6}[c][c]{\tiny{0.6}}
\psfrag{0.7}[c][c]{\tiny{0.7}}
\psfrag{0.8}[c][c]{\tiny{0.8}}
\psfrag{0.9}[c][c]{\tiny{0.9}}
\psfrag{1.0}[c][c]{\tiny{1.0}}
\psfrag{10}[c][c]{\tiny{10.0}}
\psfrag{15}[c][c]{\tiny{15.0}}
\psfrag{20}[c][c]{\tiny{20.0}}
\psfrag{l}[c][c]{\tiny{Lag}}
\psfrag{ac}[c][c]{\tiny{Autocorrelation function}}
\psfrag{pc}[c][c]{\tiny{Partial autocorrelation function}}
\psfrag{EQ}[c][c]{\tiny{Empirical quantile}}
\psfrag{TQ}[c][c]{\tiny{Theoretical quantile}}
\subfigure[LogN-ARMAX(2,0)]{\includegraphics[height=3.8cm,width=3.8cm]{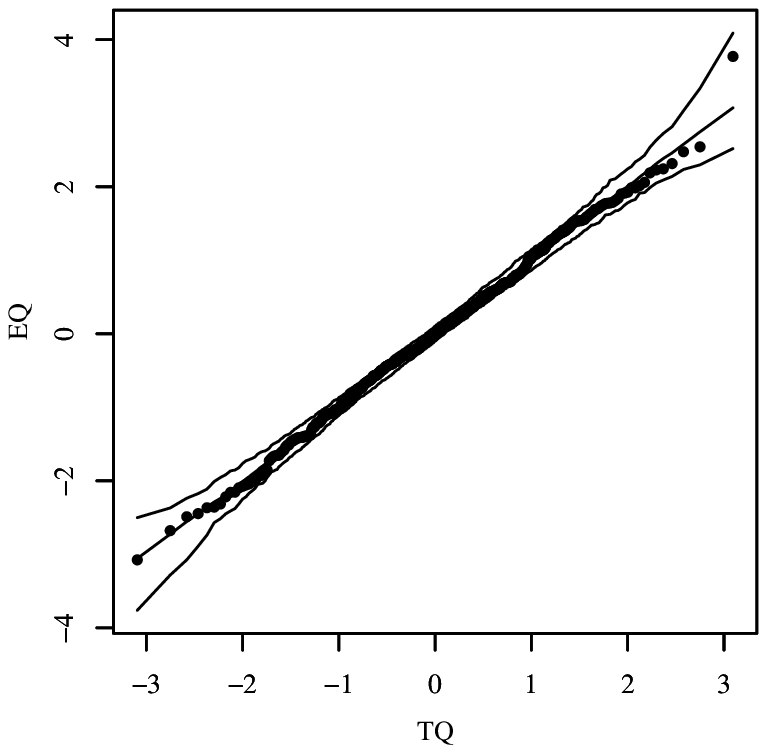}}
\subfigure[LogN-ARMAX(2,0)]{\includegraphics[height=3.8cm,width=3.8cm]{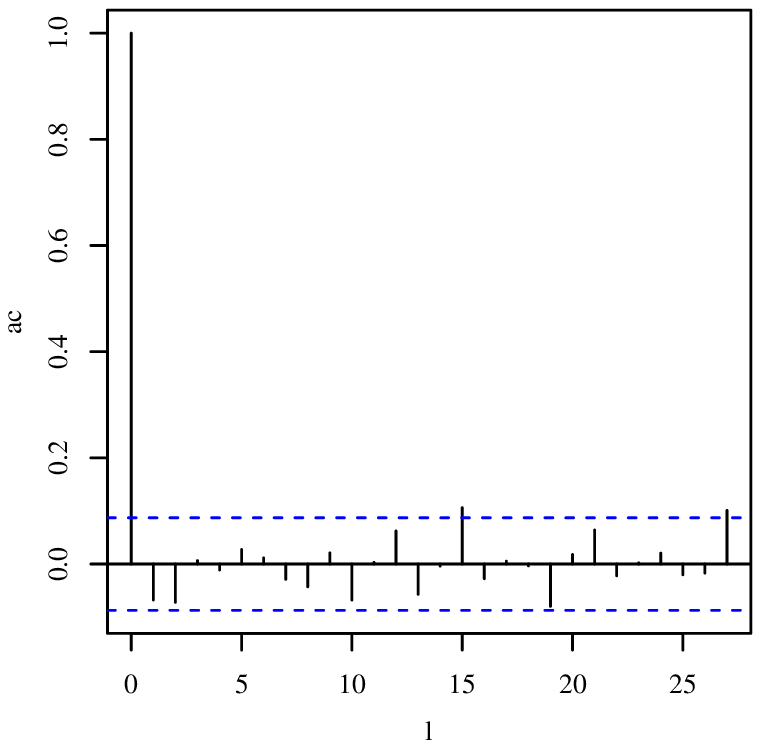}}
\subfigure[LogN-ARMAX(2,0)]{\includegraphics[height=3.8cm,width=3.8cm]{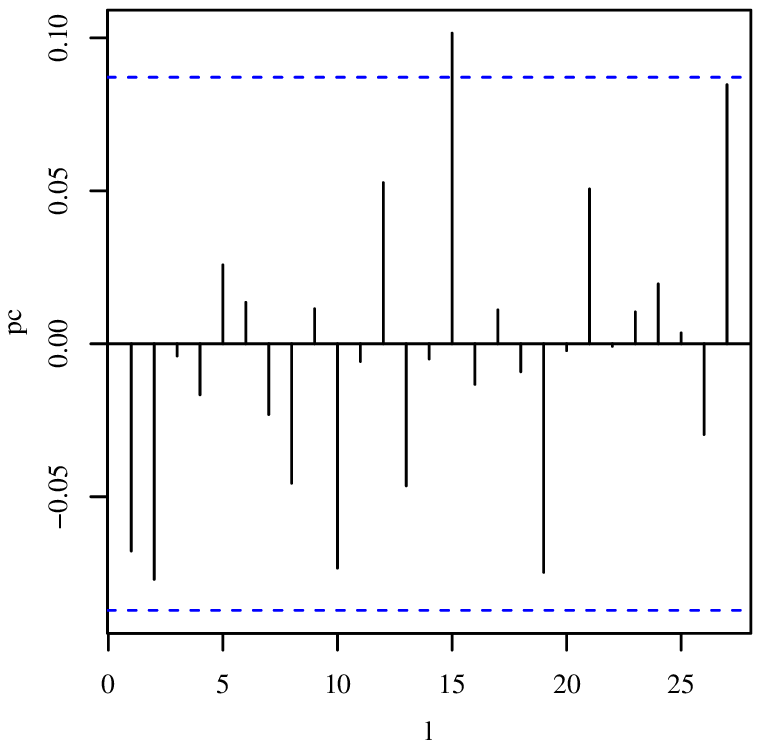}}\\
\subfigure[Log$t$-ARMAX(2,0)]{\includegraphics[height=3.8cm,width=3.8cm]{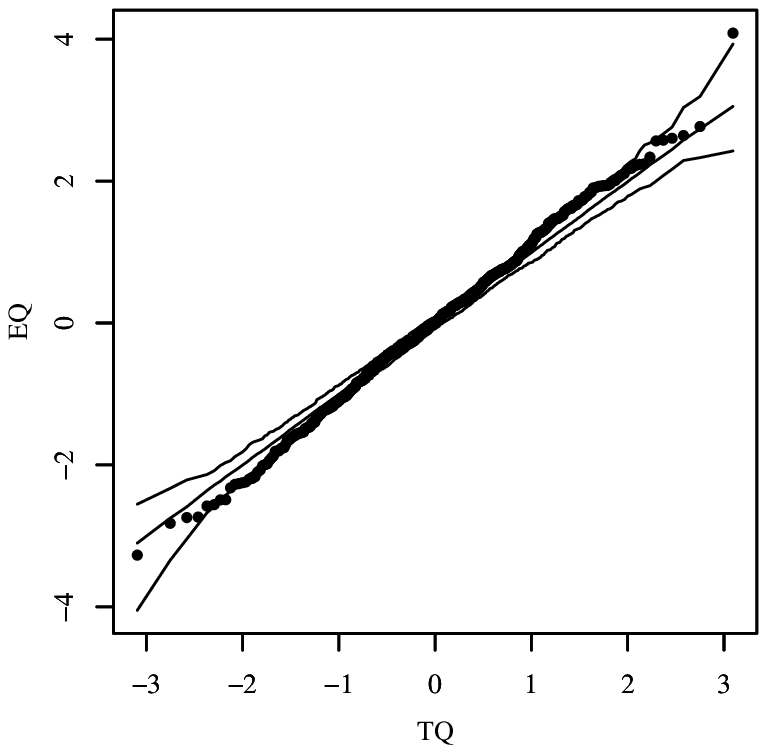}}
\subfigure[Log$t$-ARMAX(2,0)]{\includegraphics[height=3.8cm,width=3.8cm]{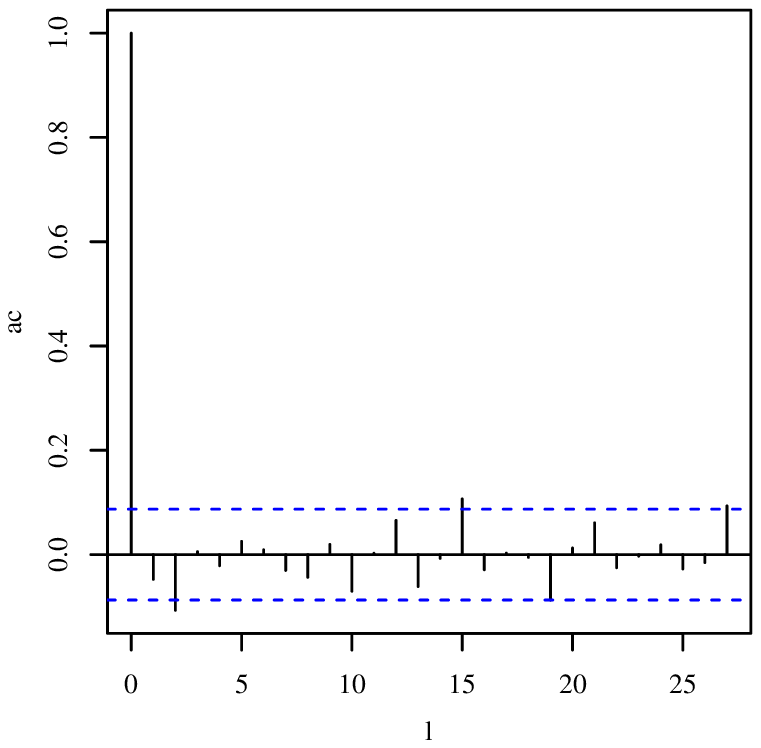}}
\subfigure[Log$t$-ARMAX(2,0)]{\includegraphics[height=3.8cm,width=3.8cm]{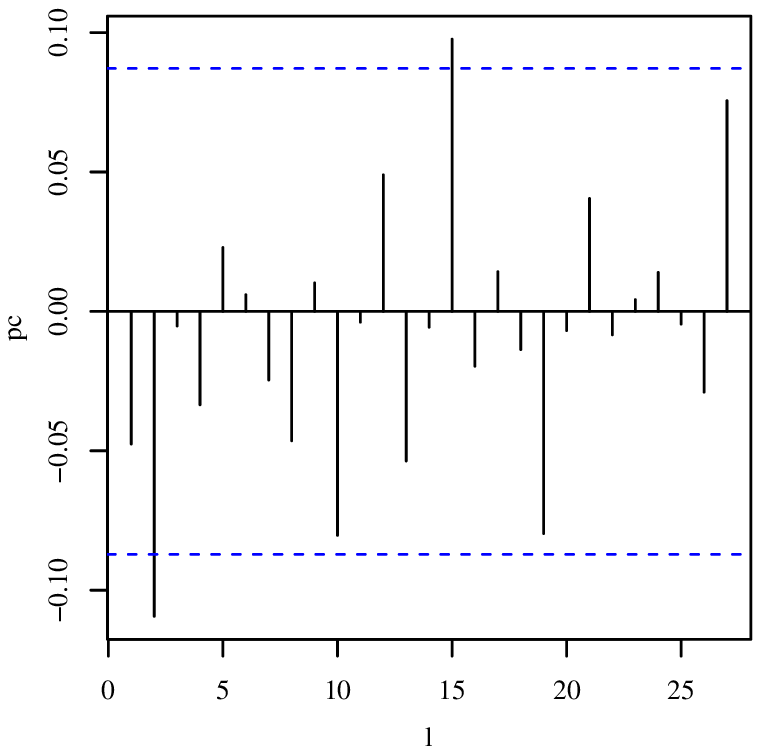}}\\
\subfigure[LogPE-ARMAX(2,0)]{\includegraphics[height=3.8cm,width=3.8cm]{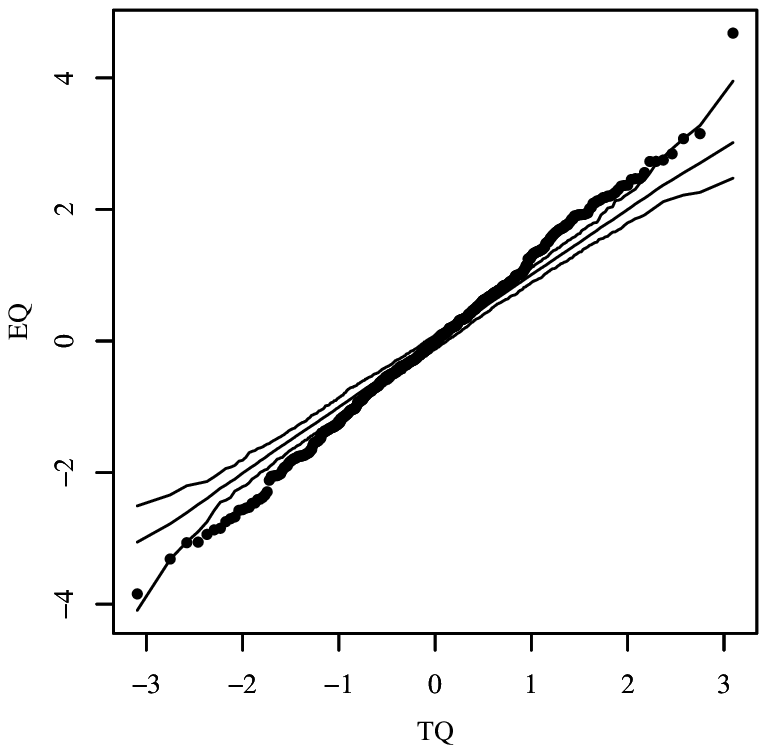}}
\subfigure[LogPE-ARMAX(2,0)]{\includegraphics[height=3.8cm,width=3.8cm]{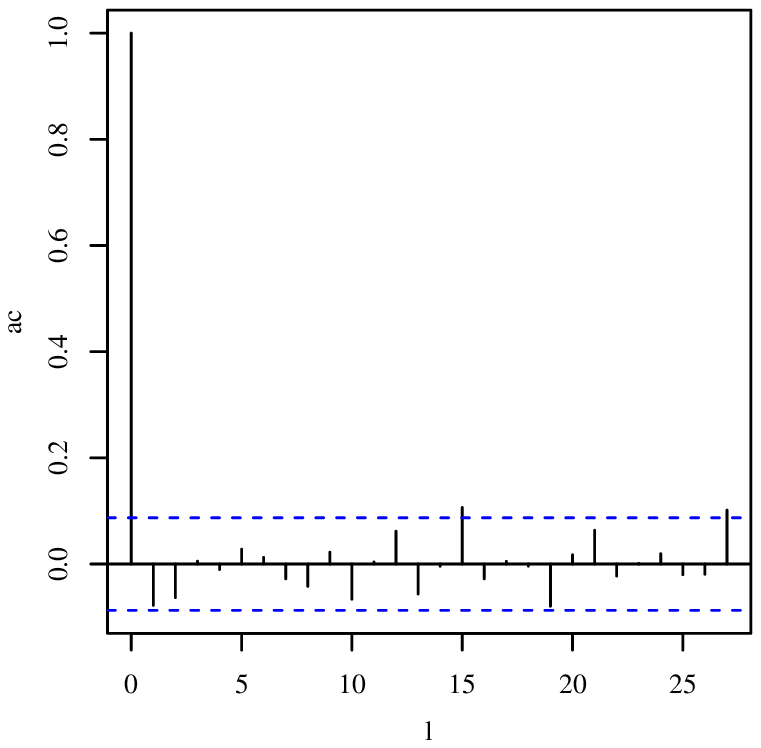}}
\subfigure[LogPE-ARMAX(2,0)]{\includegraphics[height=3.8cm,width=3.8cm]{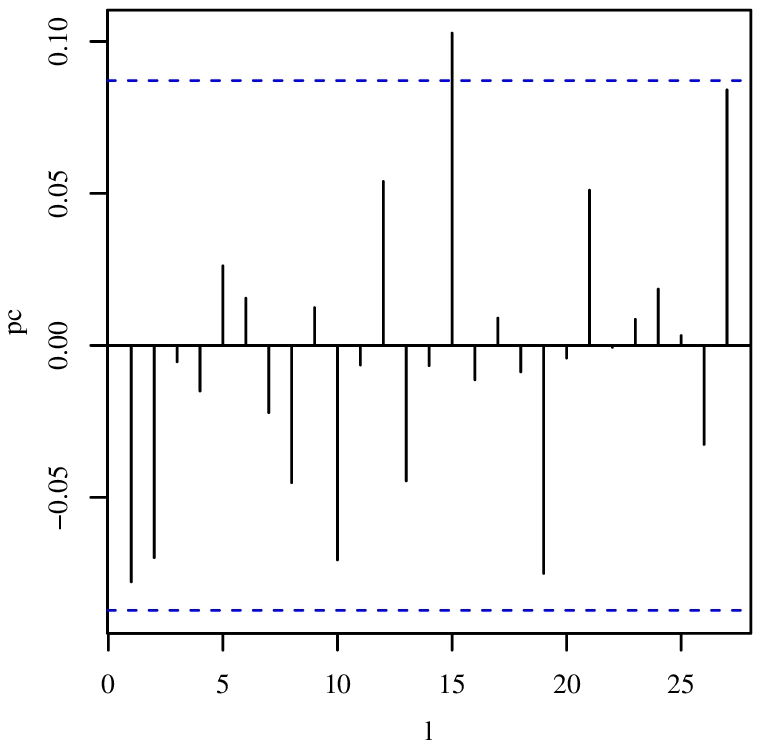}}\\
\vspace{-0.2cm}
 \caption{\small QQ plot and its envelope for the quantile residual, and sample autocorrelation and partial autocorrelation functions of the quantile residuals for 
 the indicated model with the mortality data.}
 \label{fig:QQRes_ex02}
\end{figure}

\section{Concluding remarks}\label{sec:6}
We have proposed a new class of log-symmetric regression models for dealing with cases where the errors are correlated with each other. The proposed approach is an autoregressive and moving average model with covariates and a log-symmetric conditional distribution. We have considered inference about the model parameters and a type of residual for these models. A Monte Carlo simulation study was carried out to evaluate the behavior of the conditional maximum likelihood estimators of the corresponding parameters. We have applied the proposed models to a real-world mortality data set. In general, the results have shown that the proposed models deal with serial correlation quite satisfactory and have great potential in many areas where the modelling of positive and autocorrelated data is necessary. 
As part of future research, it is of interest to discuss influence diagnostic tools and multivariate models. Related ARMA models based on the exponential family and the beta and symmetric distributions can be found in \cite{brs:03}, \cite{rochacribari:09}, \cite{zhengetal:15} and \cite{maiorcysneiros:18}, and multivariate versions of these models can be proposed as well. Work on some of these issues is currently in progress and we hope to report some findings in future papers.

\clearpage
\small

% % \bibliographystyle{apalike}
% % \bibliography{bibliography}

% % \raggedleft
% % \vspace*{2cm}
% % { \sc 
% % Department of Statistics
% % \smallskip
% % 	
% % Universidade de Bras\'ilia
% % \smallskip
% % 	
% % Bras\'ilia, Brazil
% % \smallskip
% % }
% % 
% % \texttt{heltonsaulo@gmail.com} 
% % \\
% % \texttt{rovig161@gmail.com}
\end{document}